\documentclass[conference]{IEEEtran}
\IEEEoverridecommandlockouts
\usepackage{amssymb}
\usepackage{hyperref}
\usepackage{cite}
\usepackage{graphicx}
\usepackage{array}
\usepackage{multirow}
\usepackage{amsmath}
\usepackage{stfloats}
\usepackage{graphicx}
\usepackage{subfigure}
\usepackage{tabularx}
\usepackage{booktabs}
\usepackage{epsfig,epsf}
\usepackage{setspace}
\usepackage{bm}
\usepackage{textcomp}
\usepackage{algorithmic}
\usepackage{algorithm}
\usepackage{amsmath}
\usepackage{subfigure}
\usepackage{caption}
\usepackage{graphicx}
\usepackage{setspace}
\usepackage{url}
\usepackage{verbatim}
\usepackage{graphicx}
\usepackage{cite}
\usepackage{amssymb}
\usepackage{color}
\usepackage{xpatch}
\definecolor{shadecolor}{RGB}{0,0,255}
\definecolor{blue}{RGB}{0,0,255}

\DeclareMathOperator{\sinc}{sinc}

\newtheorem{lemma}{Lemma}

\newtheorem{remark}{Remark}
\usepackage{soul}
\usepackage{booktabs} 
\usepackage{tabularx}
\usepackage{array} 
\usepackage[table]{xcolor} 
\definecolor{tableblue}{RGB}{221,238,248} \newcolumntype{Y}{>{\raggedright\arraybackslash}X}

\makeatletter

\ExplSyntaxOn
\cs_new:Npn \bibColoredItems #1#2
{
	\clist_map_inline:nn {#2} { \cs_new:cpn {bib@colored@##1} {#1} } 
}
\ExplSyntaxOff

\newcommand\bib@setcolor[1]{%
	\ifcsname bib@colored@#1\endcsname
	\expanded{\noexpand\color{\csname bib@colored@#1\endcsname}}%
	\else
	\normalcolor
	\fi
}

\makeatother

\begin{document}
	
\title{An Optical Pathway to Movable Rydberg Atomic Quantum Receivers}

\author{Qihao Peng,~\IEEEmembership{Member,~IEEE},
         Qu Luo,~\IEEEmembership{Member,~IEEE},
    Lixia Xiao,~\IEEEmembership{Senior Member,~IEEE},\\
    De Mi,~\IEEEmembership{Senior Member,~IEEE},
    Neng Ye,~\IEEEmembership{Senior Member,~IEEE},
        Cunhua Pan,~\IEEEmembership{Senior Member,~IEEE},\\
      Pei Xiao,~\IEEEmembership{Senior Member,~IEEE}, Cheng-Xiang Wang,~\IEEEmembership{Fellow,~IEEE}, Jiangzhou Wang,~\IEEEmembership{Fellow,~IEEE}.
         
    %
   \vspace{-2.5cm}
		\thanks{Q. Peng, Q. Luo, and P. Xiao are affiliated with 5G and 6G Innovation Centre, Institute for Communication Systems (ICS) of the University of Surrey, Guildford, GU2 7XH, UK. (e-mail: \{q.peng, q.u.luo,p.xiao\}@surrey.ac.uk). } \\
        \thanks{L. Xiao is with  the Research Center of 6G Mobile Communications, School of Cyber Science and Engineering, Huazhong University of Science and Technology, Wuhan 430074, China. (e-mail: lixiaxiao@hust.edu.cn).} \\
         \thanks{D. Mi with the College of Computing, Birmingham City University, Birmingham B4 7XG, U.K. (email: de.mi@bcu.ac.uk).}\\
        \thanks{N. Ye is the School of Cyberspace Science and Technology, Beijing Institute of Technology, Beijing 100081, China. (email: ianye@bit.edu.cn).}\\
        
           \thanks{C. Pan, C.-X. Wang, and J. Wang are with the National Mobile Communications Research Laboratory, Southeast University, Nanjing, China. (e-mail: \{cpan,chxwang,j.z.wang\}@seu.edu.cn).}  \\

        }

	
	\maketitle

\begin{abstract}
This paper develops an optically movable Rydberg atomic quantum receiver (RAQR), in which the probe and coupling beams are steered within each vapor cell to dynamically reconfigure the effective radio-frequency (RF) sensing position without mechanical actuation. A closed-form equivalent baseband model is derived by separating the atomic transduction coefficient, optical steering phase, and cell-center array response into distinct factors and the accuracy of the resulting model is validated against numerical solutions of the Lindblad master equation. Based on the derived model, we reveal two complementary channel-shaping mechanisms, including intrinsic beam-pattern shaping through RF-to-optical transduction and per-cell phase control enabled by optical displacement. To further exploit these capabilities, a non-convex sum-rate maximization problem is formulated over the optical positions and local oscillator design and solved via an alternating optimization framework with analytical gradients. Simulation results validate the derived model and demonstrate substantial performance gains enabled by optical movability, highlighting its potential as a programmable receiver architecture for future wireless networks.
\end{abstract}

\begin{IEEEkeywords}
Rydberg atomic quantum receiver, movable receiver, channel reconfiguration, optical beam steering.
\end{IEEEkeywords}

\section{Introduction}
\label{sec:introduction}

\IEEEPARstart{S}{ixth}-generation (6G) wireless networks are envisioned to support massive connectivity, high-precision sensing, and reliable communications \cite{wang2025modeling,zhu2026movable}. These demands impose unprecedented requirements on receiver front ends. However, conventional radio-frequency (RF) receivers remain fundamentally constrained by the thermal-noise-limited sensitivity of the low-noise amplifier chain and the Chu limit of electrically small antennas \cite{mclean1996re}. Meanwhile, the fixed geometry of conventional antenna arrays limits their adaptability to dynamic wireless environments, hindering real-time channel programmability and spatial reconfiguration.

To enhance the programmability of RF antennas, movable antennas (MAs) and fluid antenna systems (FASs) have been extensively investigated by exploiting the spatial variations of wireless channels, also known as spatial degrees of freedom (DoFs) \cite{FAScomst,Zhu2026COMST}. For channel modeling, the field-response model in \cite{zhu2023modeling} represents the channel as a superposition of path responses with position-dependent phases, while \cite{WongFAS} characterizes FASs through the spatial correlation
among discrete ports within a confined aperture. Since position-dependent channel information is required, dedicated channel estimation methods have been developed for MA/FAS architectures \cite{ma2023compressed,xu2023channel}. The achievable multiple-input-multiple-output (MIMO) capacity with continuous position optimization was studied in \cite{ma2023mimo}, and the multiuser sum-rate gains over
fixed-position arrays were quantified in \cite{Zhu2024TWC}. To solve the resulting non-convex joint position and beamforming problems, gradient descent and alternating optimization methods have been widely adopted \cite{Zhu2024TWC,Tang2025TWC}, followed by graph-based \cite{mei2024movable,11343789} and learning-based \cite{weng2024learning,wang2024ai} approaches. Beyond communication, MAs and FASs have been extended to emerging applications, including physical-layer security \cite{tang2024secure,ghadi2024physical},
integrated sensing and communication \cite{lyu2025movable,wang2024fluid},
and near-field/six-dimensional wireless architectures \cite{Zhunearfield,shao20256d}. Nevertheless, MAs and FASs improve only the spatial flexibility of RF receivers while retaining the fundamental limitations of electronic antenna front ends, such as thermal-noise-limited sensitivity and the Chu limit, which motivates the exploration of receiver paradigms beyond conventional RF architectures.

In response, Rydberg atomic quantum receivers (RAQRs) provide a
fundamentally different reception paradigm by directly mapping incident electromagnetic fields onto atomic quantum states
\cite{Peng2026WCM}. By exciting alkali atoms in a vapor cell to Rydberg states, the incident RF field modulates the optical probe signal through electromagnetically induced transparency (EIT) and Autler-Townes splitting (ATS), which can be directly read out
by a photodetector \cite{sedlacek2012microwave}. Owing to the large dipole moments of Rydberg states, RAQRs enable highly sensitive RF detection with optical readout approaching the quantum noise limit \cite{tu2024approaching}. Unlike conventional RF receivers, the noise floor of an RAQR is no longer determined solely by amplifier-chain thermal noise, but by a combination of thermal noise, photon shot noise, and quantum projection noise \cite{Gongtcom}, while the effective receiving aperture is decoupled from the
physical size of the vapor cell. Consequently, RAQRs overcome the
fundamental sensitivity and aperture constraints of conventional RF
receivers \cite{cox2018quantum}. 

Building on these properties, RAQRs have evolved from single-point
electromagnetic field sensing toward communication-oriented receiver
architectures. For RF-to-optical conversion modeling, a nonlinear
magnitude-only model was first developed in \cite{CuiJASC}, which differs fundamentally from conventional complex-valued MIMO models. To facilitate link-level analysis, equivalent baseband models of RAQRs were further established in \cite{Gongtcom,ChenTWC}, followed by a general dynamic model in \cite{zhu2025general}. Beyond signal modeling, the beamforming capability of a single vapor cell was investigated in \cite{Cuibeam}, while Peng \emph{et al.} further extended this capability through Rydberg arrays with near-field local oscillators (LOs) \cite{peng2026celllevelchannelshapingrydberg}.
Leveraging these unique advantages, RAQRs have been explored in various applications, including direction-of-arrival (DoA) estimation \cite{ChenDOA}, integrated sensing and communication (ISAC) \cite{ZhangMWC}, and simultaneous wireless information and power transfer (SWIPT) \cite{PengJSACSWIPT}.

Despite these advances, existing RAQR architectures do not provide spatial reconfigurability comparable to that of MAs and FASs. Both single-cell beamforming \cite{Cuibeam} and cell-array channel shaping \cite{peng2026celllevelchannelshapingrydberg} operate with fixed probe-coupling overlap regions, thereby limiting the exploitable intra-cell spatial DoFs. Prior vapor-cell experiments have demonstrated that the atomic sensing region can be optically repositioned by scanning the pump--probe intersection within a stationary cell using acousto-optic deflectors \cite{maddox2023rapid}. This experimental evidence establishes the physical basis for optically reconfiguring the effective sensing position and motivates an optically movable RAQR that exploits the intra-cell spatial DoFs through controllable probe--coupling overlap positioning.

In response, this paper proposes an optically movable RAQR architecture and investigates its capability for channel reshaping. Furthermore, to quantify the achievable decorrelation gain, we maximize the maximum ratio combining (MRC)-based sum rate through a dedicated optimization algorithm. The main contributions are summarized as follows:

\begin{itemize}
\item 
An optically movable RAQR architecture is proposed. Unlike conventional MAs and FASs, which rely on mechanical displacement or port switching for spatial reconfiguration, the proposed architecture reconfigures the optical position by steering the probe and coupling beams, thereby enabling continuous spatial reconfiguration without mechanical actuation.

\item 
We derive a closed-form equivalent baseband model for the movable RAQR, where the atomic transduction coefficient, optical steering phase, and array response are explicitly separated into distinct factors, rendering the effective channel an explicit and differentiable function of the optical positions. The accuracy of the adopted center approximation is further validated against the numerical solutions based on the Lindblad master equation.

\item We reveal two complementary channel-shaping mechanisms of the optically movable RAQR. The RF-to-optical transduction provides intrinsic beam-pattern shaping through phase matching, while optical displacement enables per-cell phase reconfiguration for suppressing inter-user interference, beyond the capability of the real-valued transduction coefficient. The achievable phase range and the conditions for two-user channel decorrelation are characterized.

\item We formulate a joint optical-position and LO-design optimization problem for non-convex MRC-based sum-rate maximization and develop an alternating optimization framework based on analytical gradients. Simulation results demonstrate significant user-decorrelation gains and the superiority of optically movable RAQRs over fixed-position architectures.
\end{itemize}

The remainder of this paper is organized as follows.
Section II presents the system model and the superimposed RF signal. Section III derives the closed-form expressions for the optically movable RAQR. Section IV develops the
closed-form gradients and the alternating optimization algorithm. The simulation results are presented in 
Section V, and conclusions are drawn in
Section VI.

\textit{Notations:} The operators $(\cdot)^{\mathrm{T}}$, $(\cdot)^{\mathrm{H}}$, $(\cdot)^{*}$, and $(\cdot)^{-1}$ denote the transpose, Hermitian transpose, complex conjugate, and inverse, respectively. The determinant, trace, and expectation operators are denoted by $\det[\cdot]$, $\mathrm{tr}(\cdot)$, and $\mathbb{E}\{\cdot\}$, respectively. The notation $\mathrm{diag}\{\cdot\}$ forms a diagonal matrix, $\odot$ denotes the Hadamard product, and $[\mathbf{A}]_{r,k}$ denotes the $(r,k)$-th entry of $\mathbf{A}$. $\Re\{\cdot\}$, $\Im\{\cdot\}$, $|\cdot|$, and $\arg\{\cdot\}$ denote the real part, imaginary part, magnitude, and phase, respectively. The Euclidean norm is denoted by $\|\cdot\|$. $\mathbf{I}_N$ denotes the $N\times N$ identity matrix. The notation $\mathcal{CN}(\boldsymbol{\mu},\mathbf{R})$ denotes a circularly symmetric complex Gaussian distribution with mean $\boldsymbol{\mu}$ and covariance matrix $\mathbf{R}$. Furthermore, $(\cdot)'$ denotes differentiation, $j=\sqrt{-1}$ is the imaginary unit, and $c$, $\epsilon_0$, $q$, and $a_0$ denote the speed of light, vacuum permittivity, elementary charge, and Bohr radius, respectively.

\section{System Model}
In this section, we describe the models of the Rydberg array and near-field LO. Then, we derive the output probe beam based on the superimposed RF signal. 

\subsection{Model of Rydberg Array}
\begin{figure}
    \centering
    \includegraphics[width=1\linewidth]{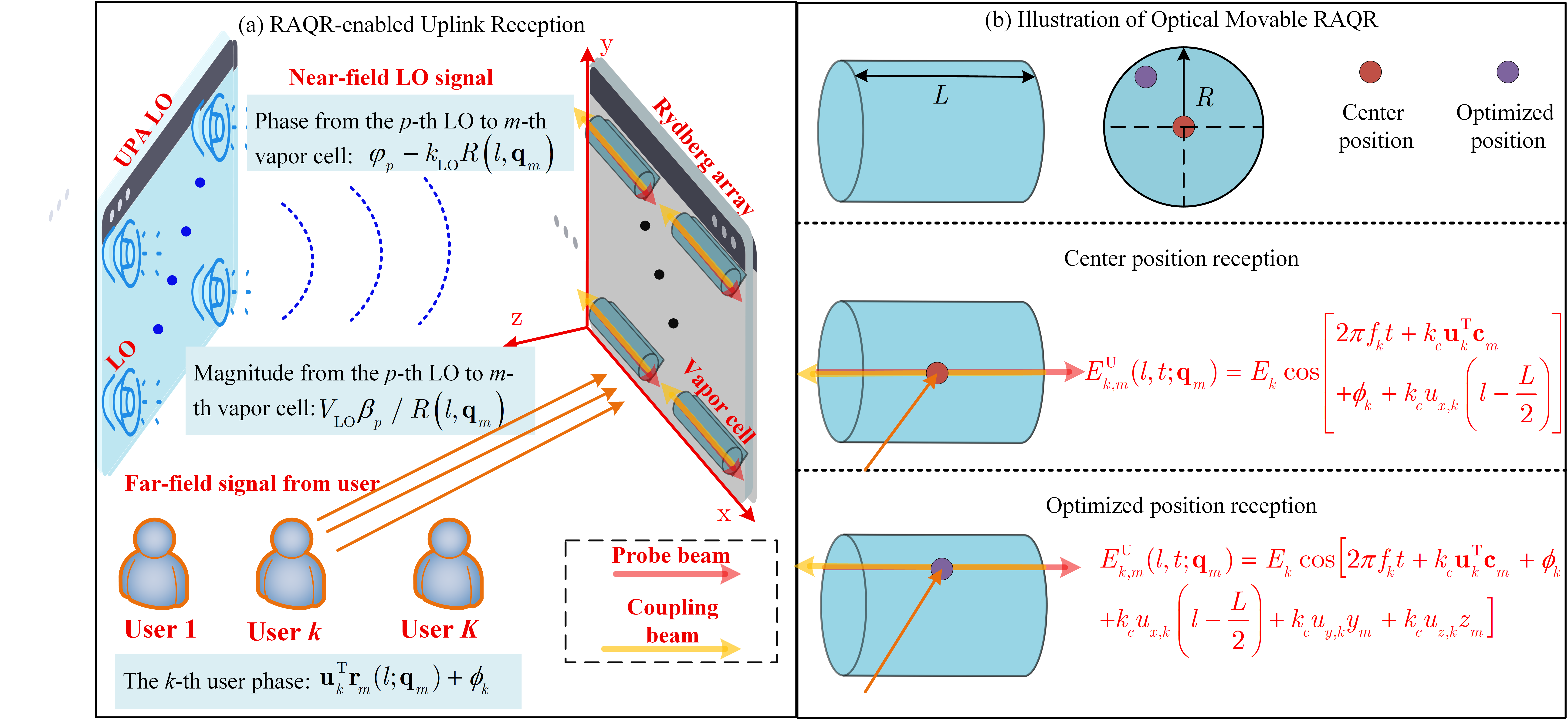}
    \caption{Illustration of optically movable RAQR.}
    \label{mRAQR}
     \vspace{-0.5cm}
\end{figure}
As shown in Fig. \ref{mRAQR}, we consider an \(M = M_x\times M_y\) Rydberg vapor cell array, where each vapor cell is modeled as a cylinder with length \(L\) and radius \(R_{\rm VC}\). The longitudinal axis of each vapor cell is parallel to the \(x\)-axis. The edge-to-edge spacing between adjacent vapor cells along the \(x\)-axis is \(d_x\), and the center-to-center spacing along the \(y\)-axis is \(d_y\). For ease of derivation, the two-dimensional index \((m_x,m_y)\) can be mapped into a one-dimensional index \(m=(m_y-1)M_x+m_x\). For the vapor cell indexed by \((m_x,m_y)\), its axial position is modeled as 
\begin{equation} \small
\mathbf r_m(l) = 
\begin{bmatrix} (m_x-1)(L+d_x)+l\\ (m_y-1)d_y\\ 0 
\end{bmatrix}, \qquad 0\le l\le L .
\end{equation} 
Accordingly, the center of the \((m_x,m_y)\)-th vapor cell is
\begin{equation} \small
\mathbf c_m = 
\begin{bmatrix} (m_x-1)(L+d_x)+\frac{L}{2}\\ (m_y-1)d_y\\ 0 
\end{bmatrix}. 
\end{equation}
To realize optically movable Rydberg sensing, the probe and coupling beams counter-propagate along the \(x\)-axis inside each vapor cell, and the EIT readout occurs only within their spatial overlap. Therefore, by judiciously designing the optical beams, the sensing region can be flexibly repositioned within each vapor cell, thereby enabling a programmable receive-side sensing response\footnote{A feasible implementation is to split the probe and coupling lasers into \(M\) cell-specific optical branches using fiber splitters, with each branch equipped with a two-dimensional micro-electro-mechanical systems (MEMS) steering mirror. For the \(m\)-th cell, the probe and coupling beams are injected from opposite ends and jointly steered to overlap at \(\mathbf q_m=[0,y_m,z_m]^{\mathrm T}\).}. Since the optical beam radius is much smaller than both the RF wavelength and the radius of the vapor cell, the probe-coupling overlap can be approximated as a thin sensing line. We denote \(\mathbf q_m=[0,y_m,z_m]^{\rm T}\) as the displacement of the optical position of the \(m\)-th cell, and the overlapping region  becomes
\begin{equation}\label{sensline}\small
\mathbf r_{m}(l;\mathbf{q}_m) = 
\begin{bmatrix} (m_x-1)(L+d_x)+l\\ (m_y-1)d_y+y_{m}\\ z_m
\end{bmatrix},
\end{equation} 
where the optical displacement satisfies \(y^2_m + z^2_m \le R_{\rm VC}^2\).

\subsection{Far-field User Electric Field}
We consider $K$ single-antenna users located in the far field of the Rydberg receiver, and thus the incident field from each user can be approximated as a plane wave over the entire vapor cell array. Let \(\mathbf u_k  = [u_{x,k},u_{y,k},u_{z,k}]^{\rm T}\) denote the directional vector of the \(k\)-th user, where \(u_{x,k}\), \(u_{y,k}\), and \(u_{z,k}\) are the projections of the incident direction onto the \(x\)-, \(y\)-, and \(z\)-axes, respectively. 
Based on the optical position in \eqref{sensline}, the electric field of the \(k\)-th user at the \((m_x,m_y)\)-th vapor cell is modeled as
\begin{equation}\small
\begin{split}
        E_{k,m}^{\rm U}
    (l,t;\mathbf{q}_m)
    &=
    E_k
    \cos
    \Big[
    2\pi f_k t +
    k_c
    \mathbf u_k^{\rm T}
    \mathbf r_{m}(l;\mathbf{q}_m)
    +
    \phi_k
    \Big],
\end{split}
\end{equation}
where \(E_k\), \(f_k\), \(k_c\), and \(\phi_k=\phi_{k,s} + \phi_{k,\rm ch}\) denote the received electric-field amplitude, the Doppler-shifted carrier frequency, the RF wavenumber, and the phase caused by the transmitted symbol \(\phi_{k,s}\) and channel \(\phi_{k,\rm ch}\), respectively. Specifically, \(f_k=f_c+f_{D,k}\), where \(f_c\) is the common carrier frequency and \(f_{D,k}\) is the Doppler shift of the \(k\)-th user. Since \(|f_{D,k}|/f_c\ll1\), a common wavenumber \(k_c=2\pi f_c/c\) is adopted for all users.

Equivalently, using the vapor cell center,
the optical position can be rewritten as
\begin{equation}
    \mathbf r_{m}(l;\mathbf{q}_m)
    =
    \mathbf c_{m}
    +
    \big(l-\frac{L}{2}\big)\mathbf e_x
    +
    y_{m}\mathbf e_y+z_m\mathbf{e}_z,
\end{equation}
where \(\mathbf{e}_x = [1,0,0]^T\), \(\mathbf{e}_y = [0,1,0]^T\), and \(\mathbf{e}_z = [0,0,1]^T\) denote the unit vectors along the x-, y-, and z-axes, respectively. Therefore, the user electric field can also be expressed as
\begin{equation}\small
\begin{split}
     & E_{k,m}^{\rm U}
    (l,t;\mathbf{q}_m)
   =
    E_k
    \cos
    \Big[
    2\pi f_k t
    +
    k_c
    \mathbf u_k^{\rm T}
    \mathbf c_{m}+
    \phi_k
    \\
    &
    +
    k_cu_{x,k}
    \left(l-\frac{L}{2}\right)
    +
    k_cu_{y,k}y_{m} +   k_cu_{z,k}z_{m} 
    \Big].
\end{split}
\end{equation}
This expression shows that optical repositioning changes the spatial phase of the far-field user signal through the term \(k_cu_{y,k}y_{m}\) and \(k_cu_{z,k}z_{m}\), while the received field amplitude \(E_k\) remains approximately unchanged due to the long-distance propagation. Therefore, the movable probe-coupling overlap mainly provides phase-domain channel reconfiguration for far-field users.

\subsection{Near-field LO Electric Field}
The LO is generated by a \(P=P_xP_y\)-element uniform planar array (UPA) placed in the near field of the Rydberg vapor cell array. The LO UPA is parallel to the \(x\)-\(y\) plane.  For ease of notation, we map the two-dimensional LO index \((p_x,p_y)\) into a one-dimensional index \(p=(p_y-1)P_x+p_x\), \(p=1,\ldots,P\). The \((p_x,p_y)\)-th LO element is located at
\begin{equation}\small
\mathbf r_{{\rm LO},p} = \begin{bmatrix} x_{\rm LO}+(p_x-1)d_{{\rm LO},x}\\ y_{\rm LO}+(p_y-1)d_{{\rm LO},y}\\ z_{\rm LO} 
\end{bmatrix},
\label{eq:lo_upa_position} 
\end{equation} where \(x_{\rm LO}\), \(y_{\rm LO}\), and \(z_{\rm LO}\) denote the reference coordinates of the LO UPA. The inter-element spacings are set as \(d_{{\rm LO},x}=d_{{\rm LO},y}=\frac{\lambda_{\rm LO}}{2}\), where \(\lambda_{\rm LO}\) is the LO wavelength.
The coefficient of the \(p\)-th LO element is \(a_p=\beta_p e^{{\rm j}\varphi_p}\), where \(\beta_p\) and \(\varphi_p\) denote the controllable amplitude and phase, respectively. The LO vector is denoted by \(\mathbf a=[a_1,\ldots,a_P]^{\rm T}\). Then, the distance between the \(p\)-th LO element and the sensing point in the \((m_x,m_y)\)-th vapor cell is
\begin{equation}\small
\begin{split}      
R_{p,m}(l;\mathbf{q}_{m})    & =    \left\|    \mathbf r_{m}(l;\mathbf{q}_{m})    -    \mathbf r_{{\rm LO},p}    \right\|_2\end{split}
\end{equation} 
Therefore, the LO signal received at the \(m\)-th vapor cell is 
\begin{equation}\small
\begin{split}
       &  \mathcal E_{{\rm LO},m}
    (l,t;\mathbf{q}_{m},\mathbf a)
  =
    V_{\rm LO}\sum_{p=1}^{P}
    \frac{\beta_p}
    {R_{p,m}(l;\mathbf{q}_{m})}\cos\big[2\pi f_{\rm LO}t- k_{\rm LO} \\
    &\times
    R_{p,m}(l;\mathbf{q}_{m})+\varphi_p\big] = \Re\left\{ \mathcal A_{{\rm LO},m}(l;\mathbf{q}_{m},\mathbf a)
    e^{{\rm j}2\pi f_{\rm LO}t}\right\},
\end{split}
    \label{eq:near_field_lo_ula}
\end{equation}
where \(V_{\rm LO} = \sqrt{60P_{\rm LO}G_{\rm LO}}\) is the LO's voltage with respect to transmission power \(P_{\rm LO}\) and transmission gain \(G_{\rm LO}\). \(f_{\rm LO}\) denotes the LO frequency, and \(k_{\rm LO}=2\pi/\lambda_{\rm LO}\) denotes the LO wavenumber. \(  \mathcal A_{{\rm LO},m}(l;\mathbf{q}_{m},\mathbf a)\) is given by

\begin{equation}\small
    \mathcal A_{{\rm LO},m}(l;\mathbf{q}_{m},\mathbf a)
    =
    V_{\rm LO}\sum_{p=1}^{P}
    \frac{\beta_p}{R_{p,m}(l;\mathbf{q}_{m})}\,
    e^{{\rm j}\left[\varphi_p-k_{\rm LO}R_{p,m}(l;\mathbf{q}_{m})\right]}.
    \label{eq:lo_phasor}
\end{equation}
The LO's magnitude and phase are defined as
\begin{equation}\small
    E_{{\rm LO},m}(l;\mathbf{q}_{m},\mathbf a)=\left|\mathcal A_{{\rm LO},m}(l;\mathbf{q}_{m},\mathbf a)\right|,
    \label{eq:lo_amp_phase}
\end{equation}
and
\begin{equation}\small
    \Phi_{{\rm LO},m}(l;\mathbf{q}_{m},\mathbf a)=\arg\left\{\mathcal A_{{\rm LO},m}(l;\mathbf{q}_{m},\mathbf a)\right\}.
\end{equation}

\subsection{Superimposed RF Electric Field}
Using the LO as the reference, the superimposed RF field of the \((m_x,m_y)\)-th vapor cell with the optical position \(\mathbf{q}_m\) is
\begin{equation}\small
\begin{split}
     & \mathcal E_{{\rm syn},m}
    (l,t;\mathbf{q}_{m},\mathbf a)
    =
     \mathcal E_{{\rm LO},m}
    (l,t;\mathbf{q}_{m},\mathbf a)
    +
    \sum_{k=1}^{K}
    E_k\cos\big[2\pi f_{k}t\\
    &+
    k_c\mathbf u_k^{\rm T}
    \mathbf r_{m}(l;\mathbf{q}_{m})
    +
    \phi_k\big] = \Re\{E_{{\rm RF},m}(l,t;\mathbf{q}_{m},\mathbf a)e^{{\rm j}2\pi f_{\rm LO}t}\}.
\end{split}
    \label{eq:syn_complex_field}
\end{equation}
Under the strong-LO condition, i.e., \( E_{{\rm LO},m}(l;\mathbf{q}_{m},\mathbf a)\gg E_k \), \(\forall k\), we have 
\begin{equation}\small
\begin{split}
  E_{{\rm RF},m}(l,t;\mathbf{q}_{m},\mathbf a)
    & \approx
    E_{{\rm LO},m}
    (l;\mathbf{q}_{m},\mathbf a)
    +
    \sum_{k=1}^{K}
    E_k
    \cos
    \big[
    2\pi \Delta f_{k}t\\
    &+
    \Delta\Phi_{m,k}
    (l;\mathbf{q}_{m},\mathbf a)
    \big],
\end{split}
    \label{eq:syn_field_strength_strong_lo}
\end{equation}
where \( \Delta f_{k} = f_k - f_{\rm LO}\). Here, we assume \(f_{\rm LO} = f_c\), and thus we have \(\Delta f_{k}=f_{D,k}\). The relative phase between the \(k\)-th user field and the LO field is
\begin{equation}\small
\begin{aligned}
    \Delta\Phi_{m,k}
    (l;\mathbf{q}_{m},\mathbf a)
    =
    &
    k_c\mathbf u_k^{\rm T}
    \mathbf r_{m}(l;\mathbf{q}_{m})
   \! +\!
    \phi_k\!-\!
    \Phi_{{\rm LO},m}
    (l;\mathbf{q}_{m},\mathbf a).
\end{aligned}
    \label{eq:user_lo_relative_phase}
\end{equation}

\subsection{Four-Level Ladder System}

We consider a four-level ladder system
$|1\rangle\rightarrow|2\rangle\rightarrow|3\rangle\rightarrow|4\rangle$, where the Rabi frequencies are defined as \(\Omega_{p} \triangleq  \Omega_{p,m}\), \(\Omega_{c} \triangleq \Omega_{c,m}\), \(\Omega_{{\rm RF},m}(l,t)=\mu_{34}E_{{\rm RF},m}(l,t;\mathbf{q}_{m},\mathbf a)/\hbar\). \(\mu_{ij}\) is the dipole moment between \(|i\rangle\rightarrow|j\rangle\). The dynamics of the four-level transition scheme are governed by the Lindblad master equation, which is given by 
\begin{equation}\small
      \frac{{\rm d}\boldsymbol{\rho} }{{\rm d} t} = -{\rm j} [\mathbf{H},\boldsymbol{\rho}]
    - \frac{1}{2}\{\boldsymbol{\Gamma},\boldsymbol{\rho} \} + \boldsymbol{\Lambda},
\end{equation}
where \(\mathbf{H}\), \(\boldsymbol{\Gamma}\), and \(\boldsymbol{\Lambda}\)
represent the Hamiltonian, the relaxation matrix, and the repopulation matrix, respectively \footnote{The detailed definitions can be found in \cite{Gongtcom}.}.
By setting \( \frac{\text{d}\boldsymbol{\rho} }{\text{d}t}  = 0\), we obtain the steady-state solution of density matrix \(\boldsymbol{\rho}\). Then, the measured susceptibility based on the probe beam is proportional to $\rho_{21}$, which is given by \cite{Gongtcom}
\begin{equation}\small\label{rho21}
    \begin{split}
 & \rho_{21}(\Omega_{\mathrm{RF}})=\Omega_{p}\times \\
 & \frac{A_{1}\Omega_{\mathrm{RF}}^{4}+A_{2}\Omega_{\mathrm{RF}}^{2}+A_{3}-{\rm j}\left(B_{1}\Omega_{\mathrm{RF}}^{4}+B_{2}\Omega_{\mathrm{RF}}^{2}+B_{3}\right)}{C_{1}\Omega_{\mathrm{RF}}^{4}+C_{2}\Omega_{\mathrm{RF}}^{2}+C_{3}},
    \end{split}
\end{equation}
where \(A_1\), \(A_2\), \(A_3\), \(B_1\), \(B_2\), \(B_3\), \(C_1\), \(C_2\), \(C_3\) can be found in Appendix A of \cite{Gongtcom}. With the given RF signal, the output probe power of the \(m\)-th cell is
given by
\begin{equation}\small\label{outprobe}
\begin{split}
        P_{{\rm out},m}(t) &= P_{{\rm in},m}  \exp\Big\{k_p D_{\Omega}\int^L_0
        \Im\Big\{\rho_{21}\big[\Omega_{{\rm RF},m}(l,t)\big]\Big\}{\rm d}l  \Big\},
\end{split}
\end{equation}
where \(P_{{\rm in},m}\) is the input power of the probe beam,
\(D_{\Omega} = -\frac{2 N_0 \mu^2_{12}}{\epsilon_0\hbar \Omega_p}\) is a
constant with \(N_0\) denoting the atomic number density and \(\epsilon_0\)
the vacuum permittivity, and \(k_p = \frac{2\pi f_p}{c}\) is the probe
wavenumber with \(f_p\) denoting the probe optical frequency. 

\section{Equivalent Channel Model}
In this section, we derive the closed-form expression of the output probe power and the equivalent baseband channel model of the movable
vapor cell array RAQR.
\subsection{Vapor Cell-Center Approximation}
It is challenging to derive the closed-form expression of the output probe power, as substituting this position-dependent field directly into the atomic response yields an intractable integral. To address this issue, we adopt a cell-center approximation. Particularly, the field amplitude is treated as a constant evaluated at the center of each vapor cell, while the phase variation is retained via a first-order Taylor expansion.

The center of the vapor cell is 
\begin{equation}\small
    \mathbf r_m^{c}(\mathbf{q}_{m})
    =
    \mathbf r_m\Big(\frac{L}{2};\mathbf{q}_{m}\Big)
    =
    \mathbf c_m
    +
    y_m\mathbf e_y+z_m\mathbf{e}_z.
    \label{eq:sensing_line_center}
\end{equation}
Based on the LO phasor \eqref{eq:lo_phasor}, the center values of the LO phasor, field strength, and phase are defined as
\begin{equation}\small
    \mathcal A_{{\rm LO},m}^{c}(\mathbf{q}_{m},\mathbf a)
    = \mathcal A_{{\rm LO},m}\Big(\frac{L}{2};\mathbf{q}_{m},\mathbf a\Big),
    \label{eq:lo_center_envelope}
\end{equation}
\begin{equation}\small
\begin{split}
        E_{{\rm LO},m}^{c}(\mathbf{q}_{m},\mathbf a)
    &= \left|\mathcal A_{{\rm LO},m}^{c}(\mathbf{q}_{m},\mathbf a)\right|,
\end{split}
    \label{eq:lo_center_am}
\end{equation}
\begin{equation}\small
\begin{split}
    \Phi_{{\rm LO},m}^{c}(\mathbf{q}_{m},\mathbf a)
    &= \arg\left\{\mathcal A_{{\rm LO},m}^{c}(\mathbf{q}_{m},\mathbf a)\right\}.
\end{split}
    \label{eq:lo_center_phase}
\end{equation}

Since the length of the vapor cell is smaller than the propagation distance between LO and the vapor cell, the LO field strength along the vapor cell is approximated as
\begin{equation}\small
    E_{{\rm LO},m}(l;\mathbf{q}_{m},\mathbf a)
    \approx
    E_{{\rm LO},m}^{c}(\mathbf{q}_{m},\mathbf a),
    \qquad
    0\le l\le L.
    \label{eq:lo_strength_center_approx}
\end{equation}
However, the phase variation along the vapor cell's length cannot be neglected, since the cell length is comparable to the RF wavelength. By applying the first-order Taylor expansion around \(l=L/2\), the LO phase is approximated as
\begin{equation}\small
    \Phi_{{\rm LO},m}(l;\mathbf{q}_{m},\mathbf a)
    \approx
    \Phi_{{\rm LO},m}^{c}(\mathbf{q}_{m},\mathbf a)
    +
    \eta_{{\rm LO},m}^{c}(\mathbf{q}_{m},\mathbf a)
    \big(l-\frac{L}{2}\big),
    \label{eq:lo_phase_taylor}
\end{equation}
where \(\eta_{{\rm LO},m}^{c}(\mathbf{q}_{m},\mathbf a)\) is written as \eqref{eq:lo_phase_slope_explicit_compact}, given at the bottom of the next page. In \eqref{eq:lo_phase_slope_explicit_compact}, \(\Delta x_{p,m} = (m_x-1)(L+d_x)+\frac{L}{2} - x_{\rm LO}-(p_x-1)d_{{\rm LO},x}\).
\begin{figure*}[bh]
    \hrulefill
    \begin{equation}\small
\begin{aligned}\small
\eta_{{\rm LO},m}^{c}(\mathbf q_{m},\mathbf a)
= \Im \Bigg\{ \frac{ \displaystyle\sum_{p=1}^{P} \beta_p\,
e^{{\rm j} \left[ \varphi_p - k_{\rm LO}R_{p,m}^{c}(\mathbf q_m) \right]}
\left[ - \frac{\Delta x_{p,m}} {\left[R_{p,m}^{c}(\mathbf q_m)\right]^3}
- {\rm j}\,k_{\rm LO} \frac{\Delta x_{p,m}}
{\left[R_{p,m}^{c}(\mathbf q_m)\right]^2} \right] }
{ \displaystyle\sum_{p=1}^{P} \frac{\beta_p}
{R_{p,m}^{c}(\mathbf q_m)}\,
e^{{\rm j} \left[ \varphi_p - k_{\rm LO}R_{p,m}^{c}(\mathbf q_m) \right]} }
\Bigg\}
.
\end{aligned}
\label{eq:lo_phase_slope_explicit_compact}
\end{equation}
\end{figure*}
Similarly, the spatial phase of the \(k\)-th far-field user at the sensing-line center is
\begin{equation}\small
    \Theta_{m,k}^{c}(\mathbf{q}_{m})
    =
    k_c\mathbf u_k^{\rm T}\mathbf r_m^{c}(\mathbf{q}_{m})
    +
    \phi_k .
    \label{eq:user_center_phase}
\end{equation}
Since the far-field user phase is linear with respect to the sensing position, its phase along the sensing line can be exactly expressed as
\begin{equation}\small
    \Theta_{m,k}(l;\mathbf{q}_{m})
    =
    \Theta_{m,k}^{c}(\mathbf{q}_{m})
    +
    k_cu_{x,k}
    \big(l-\frac{L}{2}\big).
    \label{eq:user_phase_line}
\end{equation}

Therefore, the relative phase between the \(k\)-th user signal and the LO field along the sensing line is approximated as
\begin{equation}\small
\begin{aligned}\small
    \Delta\Phi_{m,k}(l;\mathbf{q}_{m},\mathbf a)
    &=
    \Theta_{m,k}(l;\mathbf{q}_{m})
    -
    \Phi_{{\rm LO},m}(l;\mathbf{q}_{m},\mathbf a)
    \\
    &\approx
    \Delta\Phi_{m,k}^{c}(\mathbf{q}_{m},\mathbf a)
    +
    \kappa_{m,k}^{c}(\mathbf{q}_{m},\mathbf a)
    \big(l-\frac{L}{2}\big),
\end{aligned}
    \label{eq:relative_phase_taylor}
\end{equation}
where the center relative phase is
\begin{equation}\small
    \Delta\Phi_{m,k}^{c}(\mathbf{q}_{m},\mathbf a)
    =
    \Theta_{m,k}^{c}(\mathbf{q}_{m})
    -
    \Phi_{{\rm LO},m}^{c}(\mathbf{q}_{m},\mathbf a),
    \label{eq:center_relative_phase}
\end{equation}
and the effective longitudinal phase slope is
\begin{equation}\small
    \kappa_{m,k}^{c}(\mathbf{q}_{m},\mathbf a)
    =
    k_cu_{x,k}
    -
    \eta_{{\rm LO},m}^{c}(\mathbf{q}_{m},\mathbf a).
    \label{eq:effective_phase_slope}
\end{equation}

Consequently, under the strong-LO condition, the RF envelope strength along the sensing line is approximated as
\begin{equation}\small
\begin{aligned}\small
   E_{{\rm RF},m}(l,t;\mathbf{q}_{m},\mathbf a)
    \approx & 
    E_{{\rm LO},m}^{c}(\mathbf{q}_{m},\mathbf a)
    +
    \sum_{k=1}^{K}
    E_k
    \cos
    \Big[
    2\pi f_{D,k}t\\
    + &
    \Delta\Phi_{m,k}^{c}(\mathbf{q}_{m},\mathbf a)
    +\kappa_{m,k}^{c}(\mathbf{q}_{m},\mathbf a)
    (l-\frac{L}{2})
    \Big].
\end{aligned}
    \label{eq:rf_envelope_center_approx}
\end{equation}

\subsection{Closed-Form Probe Output}
Substituting \eqref{eq:rf_envelope_center_approx} into the RF Rabi frequency \(\Omega_{{\rm RF},m}(l,t)=\mu_{34}E_{{\rm RF},m}(l,t;\mathbf{q}_{m},\mathbf a)/\hbar\) defined in Section~II-E, 
we define the LO-biased operating point as \(\Omega_{{\rm LO},m}^{c}(\mathbf{q}_{m},\mathbf a)
    =
    \frac{\mu_{34}}{\hbar}
    E_{{\rm LO},m}^{c}(\mathbf{q}_{m},\mathbf a)\)
and define the user-induced RF Rabi perturbation as \(\Delta\Omega_m(l,t;\mathbf{q}_{m},\mathbf a)\), which is expressed as
\begin{equation}\small
\begin{aligned}\small
    \Delta\Omega_m(l,t;\mathbf{q}_{m},\mathbf a)
    &=
    \sum_{k=1}^{K}
    \Omega_k
    \cos
    \Big[
    2\pi f_{D,k}t
    +
    \Delta\Phi_{m,k}^{c}(\mathbf{q}_{m},\mathbf a)
    \\
    &+
    \kappa_{m,k}^{c}(\mathbf{q}_{m},\mathbf a)
    \big(l-\frac{L}{2}\big)
    \Big],
\end{aligned}
    \label{eq:delta_rabi_closed}
\end{equation}
where \( \Omega_k
    =
    \frac{\mu_{34}E_k}{\hbar}\), \(\forall k\). Owing to \(\Omega_k\ll \Omega_{{\rm LO},m}^c(\mathbf{q}_m,\mathbf{a})\), we use the first-order Taylor expansion and have
\begin{equation}\small
\begin{aligned}\small
    \Im
    \left\{
    \rho_{21}
    \left[
    \Omega_{{\rm RF},m}(l,t)
    \right]
    \right\}
    &\approx
    \Im
    \left\{
    \rho_{21}
    \left[
    \Omega_{{\rm LO},m}^{c}(\mathbf{q}_{m},\mathbf a)
    \right]
    \right\}
    \\
    &
    +
    g_m^{c}(\mathbf{q}_{m},\mathbf a)
    \Delta\Omega_m(l,t;\mathbf{q}_{m},\mathbf a),
\end{aligned}
    \label{eq:rho_im_linearization}
\end{equation}
where \(g_m^{c}(\mathbf{q}_{m},\mathbf a)\) is given by \eqref{eq:gm_closed}.
\begin{figure*}[bh]
    \hrulefill
    \begin{equation}\small
\begin{aligned}\small
   g_m^{c}(\mathbf{q}_{m},\mathbf a)=
    -2\Omega_p\Omega_{{\rm LO},m}^{c}(\mathbf{q}_{m},\mathbf a)
    \frac{
    (B_1C_2-B_2C_1)[\Omega_{{\rm LO},m}^{c}(\mathbf{q}_{m},\mathbf a)]^4
    +
    2(B_1C_3-B_3C_1)[\Omega_{{\rm LO},m}^{c}(\mathbf{q}_{m},\mathbf a)]^2
    +
    (B_2C_3-B_3C_2)
    }
    {
    \left[
    C_1[\Omega_{{\rm LO},m}^{c}(\mathbf{q}_{m},\mathbf a)]^4
    +
    C_2[\Omega_{{\rm LO},m}^{c}(\mathbf{q}_{m},\mathbf a)]^2
    +
    C_3
    \right]^2
    } .
\end{aligned}
    \label{eq:gm_closed}
\end{equation}
\end{figure*}

Substituting \eqref{eq:rho_im_linearization} into \eqref{outprobe}, we have
\begin{equation}\small
\begin{aligned}\small
    P_{{\rm out},m}(t)
      &\approx
    P_{{\rm in},m}
    \exp\Big\{
    k_pD_{\Omega}L
    \Im
    \big\{
    \rho_{21}
    \big[
    \Omega_{{\rm LO},m}^{c}(\mathbf q_{m},\mathbf a)
    \big]
    \big\}
    \Big\}
    \\
    &\times
    \exp\Big\{
    k_pD_{\Omega}\,g_m^{c}(\mathbf q_{m},\mathbf a)
    \int_{0}^{L}
    \Delta\Omega_m(l,t;\mathbf q_{m},\mathbf a)\,{\rm d}l
    \Big\}.
\end{aligned}
    \label{eq:pout_two_exp}
\end{equation}
We define the LO-biased DC output probe power as
\begin{equation}\small
     P_{0,m}^{c}(\mathbf{q}_{m},\mathbf{a})=P_{{\rm in},m}
    \exp{
    \Big\{
    k_pD_{\Omega}L
    \Im
    \left\{
    \rho_{21}
    \left[
    \Omega_{{\rm LO},m}^{c}(\mathbf{q}_{m},\mathbf a)
    \right]
    \right\}
    \Big\}}.
    \label{eq:p0_def_closed}
\end{equation}
The remaining integral in \eqref{eq:pout_two_exp} is calculated as
\begin{equation}\small
\begin{aligned}\small
  \int_{0}^{L}
  \Delta\Omega_m(l,t;\mathbf q_{m},\mathbf a)\,{\rm d}l
    & =
    L\sum_{k=1}^{K}\Omega_k \sinc
    \Big[
    \frac{\kappa_{m,k}^{c}(\mathbf q_{m},\mathbf a)L}{2}
    \Big]\\
    &\times
    \cos
    \left[
    2\pi f_{D,k}t
    +
    \Delta\Phi_{m,k}^{c}(\mathbf q_{m},\mathbf a)
    \right],
\end{aligned}
    \label{eq:cos_integral_closed}
\end{equation}
where \(\sinc(x)=\sin(x)/x\). Therefore, the output probe power can be approximated as
\begin{equation}\small
\begin{aligned}\small
   & P_{{\rm out},m}(t)
    \approx
     P_{0,m}^{c}(\mathbf{q}_{m},\mathbf{a})
    \exp
    \Big\{
    k_pD_{\Omega}g_m^{c}(\mathbf{q}_{m},\mathbf a)L
    \sum_{k=1}^{K}
    \Omega_k
   \\
    &\times   \sinc \Big[
    \frac{\kappa_{m,k}^{c}(\mathbf{q}_{m},\mathbf a)L}{2}
    \Big]
    \cos
    \left[
    2\pi f_{D,k}t
    +
    \Delta\Phi_{m,k}^{c}(\mathbf{q}_{m},\mathbf a)
    \right]
    \Big\}.
\end{aligned}
    \label{eq:pout_closed_exp}
\end{equation}
For weak user signals, the exponential term in \eqref{eq:pout_closed_exp} can be approximated as \(e^x\approx 1+x\). Therefore, the output of the probe beam is
\begin{equation}\small
    P_{{\rm out},m}(t)
    \approx
     P_{0,m}^{c}(\mathbf{q}_m,\mathbf{a})
    +
    \Delta P_{{\rm out},m}(t),
    \label{eq:pout_dc_ac}
\end{equation}
where the AC component is
\begin{equation}\small
\begin{aligned}
    &\Delta P_{{\rm out},m}(t)
    \approx
       k_pD_{\Omega}L P_{0,m}^{c}(\mathbf{q}_{m},\mathbf{a})
   g_m^{c}(\mathbf{q}_{m},\mathbf a)
    \sum_{k=1}^{K}
    \Omega_k\\
    &\times\sinc
    \Big[
    \frac{\kappa_{m,k}^{c}(\mathbf{q}_{m},\mathbf a)L}{2}
    \Big]
    \cos
    \left[
    2\pi f_{D,k}t
    +
    \Delta\Phi_{m,k}^{c}(\mathbf{q}_{m},\mathbf a)
    \right].
\end{aligned}
    \label{eq:pout_ac_closed}
\end{equation}

\subsection{Equivalent Baseband Channel Model}
The AC photocurrent of interest is
\begin{equation}\small
    i_{{\rm AC},m}(t)
    =
    \mathcal R_{\rm pd}
    \Delta P_{{\rm out},m}(t)
    +
    n_{i,m}(t),
    \label{eq:ac_photocurrent_rewrite}
\end{equation}
where \(\mathcal R_{\rm pd}\) is the photodetector responsivity and \(n_{i,m}(t)\) denotes the equivalent photocurrent noise. The photocurrent is converted into voltage through a unit load resistance \(R_0 = 1\) ohm. The voltage after the electrical amplifier with gain \(G_{\rm amp}\) is 
\begin{equation}\small
\begin{aligned}
    v_m(t)
     &\approx
    G_{\rm amp}R_0\mathcal R_{\rm pd}
    k_pD_{\Omega} L P_{0,m}^{c}(\mathbf{q}_{m},\mathbf a) g_m^{c}(\mathbf{q}_{m},\mathbf a)
     \\
    &\times 
    \sum_{k=1}^{K}
    \Omega_k\sinc
    \Big[
    \frac{\kappa_{m,k}^{c}(\mathbf{q}_{m},\mathbf a)L}{2}
    \Big]
   \cos
    \big[
    2\pi f_{D,k}t
   +\\&
    \Delta\Phi_{m,k}^{c}(\mathbf{q}_{m},\mathbf a)
    \big] 
    +
    n_{v,m}(t),
\end{aligned}
    \label{eq:voltage_ac_output}
\end{equation}
where \(n_{v,m}(t)=G_{\rm amp}R_0n_{i,m}(t)\) denotes the equivalent noise. Then, by mixing and filtering, the baseband signal is expressed as 
\begin{equation}\small\label{bbsignal}
\begin{split}
        y_{{\rm B}, m}(t) &= G_{\rm amp}R_0\mathcal R_{\rm pd}
    k_pD_{\Omega}L P_{0,m}^{c}(\mathbf{q}_{m},\mathbf a)
    \\
    &\times g_m^{c}(\mathbf{q}_{m},\mathbf a)\sum_{k=1}^{K}
    \Omega_k
    \sinc
    \Big[
    \frac{\kappa_{m,k}^{c}(\mathbf{q}_{m},\mathbf a)L}{2}
    \Big]\\
    &\times e^{{\rm j} [2\pi f_{D,k}t+ \Delta\Phi_{m,k}^{c}(\mathbf{q}_{m},\mathbf a)]} + n_m(t),
\end{split}
\end{equation}
where \(n_{m}(t)\) is the noise, including quantum projection noise, photon shot noise, and thermal noise \cite{Gongtcom}.

Then, we derive the equivalent channel model. For the \(k\)-th user, the
received field amplitude follows the free-space link budget, i.e., \(E_{m,k}
=
\frac{\sqrt{60P_{t,k}G_{k}}}{d_{{\rm pr},k}}|h_{m,k}s_k|\),
where \(P_{t,k}\) is the transmit power, \(G_{k}\) is the transmit antenna gain, and \(d_{{\rm pr},k}\) is the propagation distance. \(h_{m,k}\in \mathbb{C}^{1\times 1}\) and \(s_k\in \mathbb{C}^{1\times 1}\) denote the channel coefficient and transmitted symbol \footnote{Note that the phase of \(s_k\) is absorbed into \(\phi_k\). Furthermore, we assume that the \(k\)-th user transmits the narrowband symbol \(s_k(t)\) with \(\mathbb E[|s_k(t)|^2]=1\)}, respectively.
Based on these definitions, we have
\begin{equation}\small
\begin{aligned}
   y_{{\rm B}, m}(t)
&=
\sum_{k=1}^{K}
w_{{\rm ce},m,k}(\mathbf{q}_{m},\mathbf a)
w_{{\rm opt},m,k}(\mathbf{q}_{m})
h_{m,k}^{\rm cen}
\\
&\times \frac{\sqrt{60P_{t,k}G_k}}{d_{{\rm pr},k}}
s_k
e^{{\rm j}2\pi f_{D,k}t}e^{-{\rm j}\Phi_{{\rm LO},m}^{c}(\mathbf{q}_{m},\mathbf a)} +
 n_m(t),
\end{aligned}
\label{eq:complex_if_voltage_model}
\end{equation}
where \(h_{m,k}^{\rm cen}\), \(w_{{\rm opt},m,k}(\mathbf{q}_{m})\), and \(w_{{\rm ce},m,k}(\mathbf{q}_{m},\mathbf a)\) are given by
\begin{equation}\small
h_{m,k}^{\rm cen}
=
e^{{\rm j}\left(k_c\mathbf u_k^{\rm T}\mathbf c_m+\phi_k-\phi_{k,s}\right)},
\label{eq:h_center_def}
\end{equation}
\begin{equation}\small
w_{{\rm opt},m,k}(\mathbf{q}_{m})
=
e^{{\rm j}(k_c u_{y,k} y_m+k_cu_{z,k}z_m)  },
\label{eq:w_opt_def}
\end{equation}
and
\begin{equation}\small
\begin{aligned}
w_{{\rm ce},m,k}(\mathbf{q}_{m},\mathbf a)
&=
G_{\rm amp}R_0\mathcal R_{\rm pd}
k_pD_{\Omega} L
\frac{\mu_{34}}{\hbar}P_{0,m}^{c}(\mathbf{q}_{m},\mathbf a)
\\
&\times g_m^{c}(\mathbf{q}_{m},\mathbf a)
\sinc
\Big[
\frac{\kappa_{m,k}^{c}(\mathbf{q}_{m},\mathbf a)L}{2}
\Big]
.
\end{aligned}
\label{eq:w_cell_transduction}
\end{equation}

For the whole vapor cell array, define \(M=M_xM_y\), the stacked baseband
observation
\(\tilde{\mathbf y}_{\rm B}(t)=[y_{{\rm B},1}(t),\ldots,y_{{\rm B},M}(t)]^{\rm T}\),
and the optical displacement vector
\(\mathbf Q=[\mathbf{q}_1,\ldots,\mathbf{q}_M]\). Collecting
\([\mathbf H_{\rm cen}]_{m,k}=h_{m,k}^{\rm cen}\),
\([\mathbf W_{\rm opt}(\mathbf Q)]_{m,k}=w_{{\rm opt},m,k}(\mathbf{q}_{m})\), and
\([\mathbf W_{\rm ce}(\mathbf Q,\mathbf a)]_{m,k}
=w_{{\rm ce},m,k}(\mathbf{q}_{m},\mathbf a)\), the equivalent baseband model is
expressed as
\begin{equation}\small
\tilde{\mathbf y}(t)
=
   \boldsymbol{\Phi}^c_{\rm LO}(\mathbf{Q},\mathbf a)\mathbf H(\mathbf Q,\mathbf a)\mathbf{D}
\mathbf P^{1/2}
\mathbf s(t)
+
\mathbf n(t),
\label{eq:array_baseband_model}
\end{equation}
where \(  \boldsymbol{\Phi}^c_{\rm LO}(\mathbf{Q},\mathbf a) = {\rm diag}\{e^{-{\rm j}\Phi^c_{\rm LO,1}(\mathbf{q}_{1},\mathbf a)},\cdots,e^{-{\rm j}\Phi^c_{\rm LO,M}(\mathbf{q}_{M},\mathbf a)}\}\), \(\mathbf H(\mathbf Q,\mathbf a)
    =
    \mathbf W_{\rm ce}(\mathbf Q,\mathbf a)
    \odot
    \mathbf W_{\rm opt}(\mathbf Q)
    \odot
    \mathbf H_{\rm cen}\),
\(\mathbf{D} = {\rm diag}\{e^{{\rm j}2\pi f_{D,1} t},\cdots,
e^{{\rm j}2\pi f_{D,K} t}\}\),
\(\mathbf{s}(t) = [s_1(t),\cdots,s_K(t)]^{\rm T}\),
\(\mathbf{n}(t) = [n_1(t),\cdots,n_{M}(t)]^{\rm T}\), and
\(\mathbf P^{1/2}
=
\operatorname{diag}
\Big\{
\frac{\sqrt{60P_{t,1}G_1}}{d_{{\rm pr},1}},
\ldots,
\frac{\sqrt{60P_{t,K}G_K}}{d_{{\rm pr},K}}
\Big\}\).

\subsection{Analysis}
\begin{remark}[Physical interpretation of RF-to-optical conversion]\label{remark1}
The coefficient $w_{{\rm ce},m,k}(\mathbf{q}_{m},\mathbf a)$ in
\eqref{eq:complex_if_voltage_model} characterizes the end-to-end
transduction from the incident RF field to the equivalent baseband
signal and can be interpreted as the product of three factors.
First,
$\sinc\!\big[\kappa_{m,k}^{c}(\mathbf{q}_{m},\mathbf a)L/2\big]$ represents the analog phase-matching factor resulting from the integration along the vapor cell. When the LO phase gradient satisfies
$\eta_{{\rm LO},m}^{c}=k_cu_{x,k}$, the full aperture gain is achieved. Otherwise, the user signal is attenuated.
Second,
$g_m^{c}(\mathbf{q}_{m},\mathbf a)$ denotes the RF-to-optical conversion gain, determined by the slope of the dispersive EIT response $\Im\{\rho_{21}\}$ at the LO operating point and programmable through the LO bias.
Finally,
$P_{0,m}^{c}(\mathbf{q}_{m},\mathbf a)$ is the LO-biased DC probe power,
which is determined by the probe-beam power and the LO operating
point.
\end{remark}

\begin{remark}[Optical-based movable antenna]
The optical displacement \(\mathbf{q}_m\) enables each vapor cell to operate as a movable antenna without mechanical actuation. Compared with conventional MAs, the probe-coupling overlap region is reconfigured, while the vapor cell and all optical and electrical interfaces remain stationary, significantly reducing the implementation complexity. According to \eqref{eq:w_opt_def}, the resulting channel is equivalent to that of a conventional antenna physically translated to \(\mathbf{c}_m + \mathbf{q}_m\). Therefore, the proposed architecture inherits the channel-conditioning capability of movable/fluid antennas while avoiding physical translation of the vapor cell and its RF/optical interfaces, at the expense of an additional optical beam-steering subsystem.
\end{remark}

\begin{remark}[Multiuser decorrelation via optical displacement]
\label{prop:decorrelation_2d}
Define the multiuser Gram matrix as
\begin{equation}
\widehat{\mathbf R}(\mathbf Q)
=
\mathbf H^{\rm H}(\mathbf Q,\mathbf a)
\mathbf H(\mathbf Q,\mathbf a),
\end{equation}
whose $(i,j)$-th entry is
\begin{equation}\small
\begin{split}
\hat{\varrho}_{ij}(\mathbf Q)
&=
\sum_{m=1}^{M}
w_{{\rm ce},m,i}^{*}(\mathbf q_m,\mathbf a)
w_{{\rm ce},m,j}(\mathbf q_m,\mathbf a)
\left(h^{\rm cen}_{m,i}\right)^{*}h^{\rm cen}_{m,j}\\
&\quad\times
e^{-{\rm j}k_c
\Delta\mathbf u_{\perp,i,j}^{\rm T}\mathbf q_m}.
\end{split}
\label{eq:column_correlation_2d}
\end{equation}
$\Delta\mathbf u_{\perp,i,j}
=[0,u_{y,i}-u_{y,j},u_{z,i}-u_{z,j}]^{\rm T}$.
Perfect decorrelation among all $K$ users is achieved when for all
$\hat{\varrho}_{ij}(\mathbf Q)=0$, $i\neq j$, or
equivalently, when $\widehat{\mathbf R}(\mathbf Q)$ is diagonal. Accordingly, the following design insights can be obtained.
\begin{enumerate}
\item \textit{Larger vapor-cell radii enhance user decorrelation:}  Since $|\mathbf q_m|^2\leq R_{\rm VC}$, the total geometric
phase range available for user pair $(i,j)$ is \(
2k_cR_{\rm VC}
|\Delta\mathbf u_{\perp,i,j}|2\). Therefore,
a larger $R_{\rm VC}$  provides more controllable phase tuning and enhances spatial reconfigurability through the position-dependent conversion gain.

\item \textit{Conditions for perfect multiuser decorrelation:} Perfect decorrelation requires the $K$ nonzero user channels to be mutually orthogonal.
Since $\mathbf H(\mathbf Q,\mathbf a)\in\mathbb C^{M\times K}$,
a necessary condition is $M\geq K$. Otherwise, perfect decorrelation is impossible regardless of the optical position and LO designs. Meanwhile, even though $M\geq K$, perfect decorrelation can be achieved only if the users exhibit sufficiently distinguishable position-dependent responses and there exists a common feasible design $(\mathbf Q,\mathbf a)$ that makes their equivalent channels mutually orthogonal.
\item \textit{Limitations and practical design implications:} Although increasing \(M\) enlarges the receive dimension and improves multiuser separability, transverse optical displacement introduces no relative phase between users \(i\) and \(j\) when \(\Delta\mathbf u_{\perp,i,j}=\mathbf 0\). In this case, phase-domain discrimination is unavailable. However, gain-domain discrimination may still be effective through \(\operatorname{sinc}\left[
\frac{(k_cu_{x,k}-\eta_{{\rm LO},m}^{c})L}{2}
\right]\). For same-half-space users with identical arrival directions, both \(u_{x,i}=u_{x,j}\) and \(\Delta\mathbf u_{\perp,i,j}=\mathbf 0\) hold. In this case, neither optical-displacement-induced phase control nor phase-matching gain can distinguish the two users, and increasing \(M\) cannot improve their spatial separability.
\end{enumerate}
\end{remark}

\section{Optical and LO Design}

In this section, we employ a linear MRC detector in the receiver and jointly design the near-field LO and the optical positions to maximize the sum rate. 

\subsection{Problem Formulation}
\label{subsec:mrc_sum_rate}
We focus on the joint design of the near-field LO and the optical positions under the assumptions of perfect channel state information and perfect Doppler compensation. Furthermore, the user's transmit power is fixed and thus is omitted from the
subsequent optimization. Accordingly, the compensated baseband signal
is expressed as
\begin{equation}\small
\widetilde{\mathbf y}(t)
=
\boldsymbol{\Phi}_{\rm LO}^{c}(\mathbf Q,\mathbf a)
\mathbf H(\mathbf Q,\mathbf a)
\mathbf s(t)
+
\mathbf n(t).
\label{eq:mse_received_model}
\end{equation}
Generally, the receiver noise consists of multiple components,
including electronic thermal noise, photon shot noise, and quantum
projection noise 
\cite{Gongtcom}. Although the photon shot noise and quantum projection noise generally depend on the LO and optical-position designs, the direct intensity of detection architecture considered here operates in a thermal-noise-dominated regime \cite{Gongtcom}. Therefore, the variation of the total noise power with \(\mathbf Q\) and \(\mathbf a\) is negligible compared with the position-independent thermal-noise floor. Accordingly, we approximate the noise covariance as \(\mathbf n(t)
\sim
\mathcal{CN}
\left(
\mathbf 0,\mathbf R_n=\sigma^2\mathbf I_M
\right)\).
With an MRC detector \(\mathbf{W}_{\rm MRC}(\mathbf Q,\mathbf a) = \boldsymbol\Phi^{c}_{\rm LO}(\mathbf Q,\mathbf a)\mathbf H(\mathbf Q,\mathbf a) \), the received signal from the \(k\)-th user is
\begin{equation}\small
\begin{split}
        \hat s_k(t)
    &=
    \mathbf w_k^{\rm H}(\mathbf Q,\mathbf a)
    \widetilde{\mathbf y}(t)
    =
       \hat \varrho_{kk}(\mathbf Q,\mathbf a)\,s_k(t)
    \\
    &+
    \sum_{j\neq k}
        \hat\varrho_{kj}(\mathbf Q,\mathbf a)\,s_j(t)
    +
    \tilde n_k(t),
\end{split}
    \label{eq:mrc_output}
\end{equation}
where \(\mathbf{w}_{k}(\mathbf Q,\mathbf a)\) is the \(k\)-th column  of \(\mathbf{W}_{\rm MRC}\) and $\tilde n_k(t)
={\mathbf h}_k^{\rm H}
[\boldsymbol\Phi^{c}_{\rm LO}]^{\rm H}\mathbf n(t)$ has variance
of $\sigma^2\hat\varrho_{kk}(\mathbf Q,\mathbf a)$. \( \hat\varrho_{kj}(\mathbf Q,\mathbf a) ={\mathbf h}_k^{\rm H}(\mathbf Q,\mathbf a){\mathbf h}_j(\mathbf Q,\mathbf a) \), where \({\mathbf h}_k(\mathbf Q,\mathbf a)\) denotes the \(k\)-th column of \(\mathbf H(\mathbf Q,\mathbf a)\).  With
$\mathbb E[\mathbf s(t)\mathbf s^{\rm H}(t)]=\mathbf I_K$, the
received SINR of user $k$ is
\begin{equation}\small
    {\rm SINR}_k(\mathbf Q,\mathbf a)
    =
    \frac{    \hat \varrho_{kk}^2(\mathbf Q,\mathbf a)}{\Upsilon_k(\mathbf Q,\mathbf a)},
    \label{eq:mrc_sinr}
\end{equation}
where \( \Upsilon_k(\mathbf Q,\mathbf a)
    \triangleq
    \sum_{j\neq k}
    |    \hat \varrho_{kj}(\mathbf Q,\mathbf a)|^2
    +
    \sigma^2    \hat\varrho_{kk}(\mathbf Q,\mathbf a)\).
As can be seen, the effective energy $    \hat \varrho_{kk}(\mathbf Q,\mathbf a)$ and user interference $|    \hat\varrho_{kj}(\mathbf Q,\mathbf a)|^2$ depend on the design variables exclusively through user correlation. Therefore, the MRC criterion directly rewards the two
mechanisms quantified in Remark~\ref{prop:decorrelation_2d},
namely enlarging the per-user captured energy and reducing the
inter-user correlations.

Based on \eqref{eq:mrc_sinr}, the achievable sum rate
under MRC is $R_{\rm MRC}(\mathbf Q,\mathbf a)
=\sum_{k=1}^{K}\log_2[1+{\rm SINR}_k(\mathbf Q,\mathbf a)]$. The sum rate maximization is formulated as
\begin{subequations}\small
\begin{align}
\mathcal P_1:\quad
\max_{\mathbf Q,\mathbf a}\quad&
R_{\rm MRC}(\mathbf Q,\mathbf a)
\notag
\\
{\rm s.t.}\quad&
y_m^2+z_m^2\le R_{\rm VC}^2,
\quad m=1,\ldots,M,
\label{eq:mrc_position_constraint}
\\
&
\|\mathbf a\|_2^2\le1.
\label{eq:mrc_lo_constraint}
\end{align}
\label{eq:mrc_optimization}
\end{subequations}
The first constraint ensures that the $m$-th optical sensing line
remains inside the circular cross section of the corresponding vapor
cell, whereas the second constraint limits the normalized excitation
power of the near-field LO array. 

\subsection{Proposed Joint Design}
Problem~$\mathcal P_1$ is challenging to solve as it inherits the coupled nonconvex structure of the fractional SINR and sum rate. To tackle this issue, we employ an alternating algorithm to iteratively optimize the LO and optical design. Furthermore, both optimizations are solved by projected-gradient ascent. In the following, we derive the gradient of $R_{\rm MRC}(\mathbf Q,\mathbf a)$ with respect to any real design variable $\varkappa\in\{y_m,z_m\}\cup\{\beta_p,\varphi_p\}$, \(\forall m \in \{1,\cdots,M\}\), \(\forall p \in \{1,\cdots,P\}\).



First, we define the positive scalars, which are given by
\begin{equation}\small
\begin{split}
\Xi_k(\mathbf Q,\mathbf a)
&\triangleq
\frac{1}
{\Upsilon_k(\mathbf Q,\mathbf a)
\big[
\Upsilon_k(\mathbf Q,\mathbf a)
+    \hat \varrho_{kk}^2(\mathbf Q,\mathbf a)
\big]\ln2},
\\
S_k(\mathbf Q,\mathbf a)
&\triangleq
2\sum_{j\neq k}
|    \hat \varrho_{kj}(\mathbf Q,\mathbf a)|^2
+
\sigma^2    \hat\varrho_{kk}(\mathbf Q,\mathbf a).
\end{split}
\label{eq:mrc_scalar_defs}
\end{equation}
Then, we have the following lemma based on the objective function.

\begin{lemma}[MRC sum-rate gradient]
\label{lem:mrc_gradient}
Construct the Hermitian weight matrix
$\boldsymbol\Psi(\mathbf Q,\mathbf a)\in\mathbb C^{K\times K}$ with
\begin{equation}\small
\begin{split}
{[\boldsymbol\Psi(\mathbf Q,\mathbf a)]}_{i,i}
&=
2\Xi_i(\mathbf Q,\mathbf a)
    \hat\varrho_{ii}(\mathbf Q,\mathbf a)
S_i(\mathbf Q,\mathbf a),
\\
{[\boldsymbol\Psi(\mathbf Q,\mathbf a)]}_{j,i}
&=
-2
\Big[
\Xi_i(\mathbf Q,\mathbf a)
    \hat\varrho_{ii}^2(\mathbf Q,\mathbf a)
\\
&+
\Xi_j(\mathbf Q,\mathbf a)
    \hat\varrho_{jj}^2(\mathbf Q,\mathbf a)
\Big]
    \hat\varrho_{ij}^{*}(\mathbf Q,\mathbf a),
\quad j\neq i.
\end{split}
\label{eq:mrc_weight_matrix}
\end{equation}
Then, for any real design variable $\varkappa\in\{y_m,z_m\}\cup\{\beta_p,\varphi_p\}$, the gradient of the
MRC sum rate is
\begin{equation}\small
\begin{aligned}
\frac{\partial R_{\rm MRC}(\mathbf Q,\mathbf a)}{\partial\varkappa}
&=
\Re\Big\{
\sum_{m=1}^{M}
\sum_{k=1}^{K}
\big[
\boldsymbol\Lambda(\mathbf Q,\mathbf a)
\big]_{m,k}^{*}
\\
&\times
\frac{\partial[\mathbf H(\mathbf Q,\mathbf a)]_{m,k}}
{\partial\varkappa}
\Big\},
\end{aligned}
\label{eq:mrc_master_gradient}
\end{equation}
where $\boldsymbol\Lambda(\mathbf Q,\mathbf a)
\triangleq
\mathbf H(\mathbf Q,\mathbf a)
\boldsymbol\Psi(\mathbf Q,\mathbf a)$.
\end{lemma}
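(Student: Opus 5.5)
The plan is a direct chain-rule computation in three layers:
\[
\varkappa\;\longrightarrow\;[\mathbf H(\mathbf Q,\mathbf a)]_{m,k}\;\longrightarrow\;\hat\varrho_{kj}(\mathbf Q,\mathbf a)\;\longrightarrow\;R_{\rm MRC}.
\]
The first layer is left symbolic as $\partial[\mathbf H]_{m,k}/\partial\varkappa$, exactly as in \eqref{eq:mrc_master_gradient}. Since $\varkappa$ is real, I will work with real differentials $\mathrm d(\cdot)$ throughout. All complex dependence then enters through terms $\Re\{\cdot\}$, so no Wirtinger calculus is needed.

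\textbf{Step 1 (scalar layer).} For each $k$, differentiate $\log_2(1+\hat\varrho_{kk}^2/\Upsilon_k)$. This gives
\[
\mathrm dR_k=\Xi_k\big[2\hat\varrho_{kk}\Upsilon_k\,\mathrm d\hat\varrho_{kk}-\hat\varrho_{kk}^2\,\mathrm d\Upsilon_k\big],
\]
with $\Xi_k$ as in \eqref{eq:mrc_scalar_defs}. Next substitute
\[
\mathrm d\Upsilon_k=\sum_{j\neq k}\mathrm d|\hat\varrho_{kj}|^2+\sigma^2\,\mathrm d\hat\varrho_{kk}.
\]
The total coefficient of $\mathrm d\hat\varrho_{kk}$ is then $\Xi_k\hat\varrho_{kk}(2\Upsilon_k-\sigma^2\hat\varrho_{kk})$. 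Expanding $2\Upsilon_k-\sigma^2\hat\varrho_{kk}$ shows this equals $\Xi_k\hat\varrho_{kk}S_k$, which explains the definition of $S_k$. The coefficient of each $\mathrm d|\hat\varrho_{kj}|^2$ is $-\Xi_k\hat\varrho_{kk}^2$.

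\textbf{Step 2 (Gram layer).} Use $\hat\varrho_{kj}=\mathbf h_k^{\rm H}\mathbf h_j$. This gives
\[
\mathrm d\hat\varrho_{kk}=2\Re\{\mathbf h_k^{\rm H}\mathrm d\mathbf h_k\},\qquad
\mathrm d|\hat\varrho_{kj}|^2=2\Re\big\{\hat\varrho_{kj}^{*}\big(\mathrm d\mathbf h_k^{\rm H}\,\mathbf h_j+\mathbf h_k^{\rm H}\,\mathrm d\mathbf h_j\big)\big\}.
\]
The diagonal contributions immediately give the terms $2\Xi_i\hat\varrho_{ii}S_i\,\Re\{\mathbf h_i^{\rm H}\mathrm d\mathbf h_i\}$, i.e.\ $[\boldsymbol\Psi]_{i,i}$.

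\textbf{Step 3 (regrouping the cross terms).} This bookkeeping is the main obstacle. For fixed $i$, I collect every cross term that is linear in $\mathrm d\mathbf h_i$. There are two sources:
\begin{itemize}
\item pairs $(k,j)=(i,j)$, which via $\Re\{\hat\varrho_{ij}^{*}\,\mathrm d\mathbf h_i^{\rm H}\mathbf h_j\}=\Re\{\hat\varrho_{ij}\,\mathbf h_j^{\rm H}\mathrm d\mathbf h_i\}$ carry weight $\Xi_i\hat\varrho_{ii}^2$;
\item pairs $(k,j)=(j,i)$, which via $\hat\varrho_{ji}^{*}=\hat\varrho_{ij}$ give the same inner product $\hat\varrho_{ij}\,\mathbf h_j^{\rm H}\mathrm d\mathbf h_i$ with weight $\Xi_j\hat\varrho_{jj}^2$.
\end{itemize}
Together these yield
\[
-2\big(\Xi_i\hat\varrho_{ii}^2+\Xi_j\hat\varrho_{jj}^2\big)\,\Re\{\hat\varrho_{ij}\,\mathbf h_j^{\rm H}\,\mathrm d\mathbf h_i\}.
\]
Summing Steps 2 and 3 gives
\[
\mathrm dR_{\rm MRC}=\Re\Big\{\sum_{i,j}c_{j,i}\,\mathbf h_j^{\rm H}\,\mathrm d\mathbf h_i\Big\},
\]
with $c_{i,i}=[\boldsymbol\Psi]_{i,i}$ real and $c_{j,i}=-2(\Xi_i\hat\varrho_{ii}^2+\Xi_j\hat\varrho_{jj}^2)\hat\varrho_{ij}$ for $j\neq i$.

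\textbf{Step 4 (matrix form).} Write $\Lambda=\mathbf H\boldsymbol\Psi$, so that
\[
\sum_{m,i}[\Lambda]_{m,i}^{*}\,\mathrm d[\mathbf H]_{m,i}=\sum_{i,j}[\boldsymbol\Psi]_{j,i}^{*}\,\mathbf h_j^{\rm H}\,\mathrm d\mathbf h_i.
\]
Matching this with Step 3 requires $[\boldsymbol\Psi]_{j,i}=c_{j,i}^{*}$, which reproduces \eqref{eq:mrc_weight_matrix} including the conjugate $\hat\varrho_{ij}^{*}$. Hermitian symmetry of $\boldsymbol\Psi$ follows from $\hat\varrho_{ji}=\hat\varrho_{ij}^{*}$ and the symmetric bracket. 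Dividing by $\mathrm d\varkappa$ then gives \eqref{eq:mrc_master_gradient}.

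In Step 3 I will check carefully that each unordered pair is counted exactly once from each side and that the factor $2$ from $\mathrm d|\cdot|^2$ is not duplicated.
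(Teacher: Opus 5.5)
Your proposal is correct and follows essentially the same route as the paper's appendix proof. Both arguments apply the chain rule through $\mathrm{SINR}_k$ with the prefactor $\Xi_k$, absorb the $\sigma^2\,\mathrm{d}\hat\varrho_{kk}$ term into $S_k$, and use the Gram-derivative identities. Both then regroup the cross terms via $\Re\{x\}=\Re\{x^*\}$ and $\hat\varrho_{ji}=\hat\varrho_{ij}^*$ into collector vectors that are exactly the columns of $\mathbf H\boldsymbol\Psi$. Your bookkeeping in Steps 3--4 checks out, including the conjugate on $\hat\varrho_{ij}$ in $[\boldsymbol\Psi]_{j,i}$ and the Hermitian symmetry.
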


\begin{IEEEproof}
See Appendix~\ref{app:mrc_gradient_proof}.
\end{IEEEproof}

Based on Lemma \ref{lem:mrc_gradient}, the gradient of the sum rate can be obtained by using the partial derivative of the equivalent channel. Therefore,  we iteratively optimize the optical and LO design by deriving their corresponding derivatives.
\subsubsection{Optical Design}
For the given LO design at the \(r\)-th iteration, the gradient of the sum rate is given by
\begin{equation}\small
\begin{split}
    \frac{\partial R_{\rm MRC}(\mathbf Q,\mathbf a^{(r)})}{\partial \varkappa}
&=
\Re\Big\{\sum_{m=1}^{M}
\sum_{k=1}^{K}
\big[
\boldsymbol\Lambda(\mathbf Q,\mathbf a^{(r)})
\big]_{m,k}^{*}\\
&\times 
\frac{\partial[\mathbf H(\mathbf Q,\mathbf a^{(r)})]_{m,k}}
{\partial \varkappa}
\Big\},
\end{split}
\label{eq:mrc_position_gradient}
\end{equation}
where \(\varkappa\in\{y_m,z_m\}\), \(\forall m\in \{1,\cdots,M\}\). With \( [{\mathbf H}(\mathbf Q,\mathbf a^{(r)})]_{m,k} =  w_{{\rm ce},m,k}(\mathbf{q}_{m},\mathbf a^{(r)} ) w_{{\rm opt},m,k}(\mathbf{q}_{m}) h_{m,k}^{\rm cen} \), we need to derive the partial derivative of \(w_{{\rm ce},m,k}(\mathbf{q}_{m},\mathbf a^{(r)} )\) and \(w_{{\rm opt},m,k}(\mathbf{q}_{m})\).

By applying the product rule to
$[{\mathbf H}(\mathbf Q,\mathbf a^{(r)})]_{m,k}$,
the channel partial derivative is
\begin{equation}\small
\begin{aligned}
&\frac{
\partial[{\mathbf H}(\mathbf Q,\mathbf a^{(r)})]_{m,k}
}{
\partial \varkappa
}
=
h_{m,k}^{\rm cen}
e^{{\rm j}k_c
\left(
u_{y,k}y_m+u_{z,k}z_m
\right)}
\Big\{
\varpi_m
\Big[
{\rm j}k_cu_{\varkappa,k}\\
&\times
\sinc
\big[
\tfrac{\kappa_{m,k}^{c}(\mathbf{q}_{m},\mathbf a^{(r)})L}{2}
\big]-
\frac{L}{2}
\sinc'
\big[
\tfrac{\kappa_{m,k}^{c}(\mathbf{q}_{m},\mathbf a^{(r)})L}{2}
\big]\\
&\times
\frac{\partial\eta_{{\rm LO},m}^{c}(\mathbf{q}_{m},\mathbf a^{(r)})}{\partial \varkappa}
\Big]+
G_{\rm amp}R_0\mathcal R_{\rm pd}
k_pD_{\Omega}L
\frac{\mu_{34}}{\hbar}\\
&\times
P_{0,m}^{c}(\mathbf{q}_{m},\mathbf a^{(r)})
\Big[
k_pD_{\Omega}L
\big[g_m^{c}(\mathbf{q}_{m},\mathbf a^{(r)})\big]^2+
g_m^{c\prime}(\mathbf{q}_{m},\mathbf a^{(r)})
\Big]\\
&\times
\sinc
\big[
\tfrac{\kappa_{m,k}^{c}(\mathbf{q}_{m},\mathbf a^{(r)})L}{2}
\big]
\frac{
\partial\Omega_{{\rm LO},m}^{c}(\mathbf{q}_{m},\mathbf a^{(r)})
}{
\partial \varkappa
}
\Big\},
\end{aligned}
\label{eq:opt_channel_derivative}
\end{equation}
where 
\begin{equation}\small
\begin{split}
\varpi_m
&\triangleq
G_{\rm amp}R_0\mathcal R_{\rm pd}
k_pD_{\Omega}L
\frac{\mu_{34}}{\hbar}
P_{0,m}^{c}
\big(
\mathbf q_{m},\mathbf a^{(r)}
\big)
g_m^{c}
\big(
\mathbf q_{m},\mathbf a^{(r)}
\big).
\end{split}
\label{eq:opt_gamma}
\end{equation}
The remaining terms can be found in Appendix \ref{app:opt_gradient_proof}.

Based on the derived gradient, the optical positions are updated
sequentially by a projected-gradient method, as summarized in
Algorithm~\ref{alg:mrc_optical_ascent}.

\begin{algorithm}[t]
\caption{Projected-Gradient-Ascent Optical Design for MRC Sum-Rate
Maximization}
\label{alg:mrc_optical_ascent}
\small
\begin{algorithmic}[1]
\STATE Initialize $\ell=0$, the optical positions
$\mathbf Q^{(r)}_{0}=\mathbf Q^{(r-1)}$, the initial step size
$\epsilon$, the step-reduction factor $0<\tau_{\epsilon}<1$, and the
stopping tolerance $\varepsilon_{\rm stop}$.
\STATE Compute
$R^{(r)}_{0}=R_{\rm MRC}\big(\mathbf Q^{(r)}_{0},\mathbf a^{(r)}\big)$,
and set the relative increase $\Delta R=+\infty$.
\WHILE{$\Delta R>\varepsilon_{\rm stop}$}
\STATE Update $\ell\leftarrow\ell+1$ and set
$\epsilon_{\ell}=\epsilon$;
\STATE Calculate the full ascent direction
$\mathbf G^{(r)}_{\ell}
=\nabla_{\mathbf Q}R_{\rm MRC}
\big(\mathbf Q^{(r)}_{\ell-1},\mathbf a^{(r)}\big)$;
\STATE Calculate the tentative positions
$\mathbf Q_{\rm ten}
=\Pi_{\mathcal Q}
\big(\mathbf Q^{(r)}_{\ell-1}
+\epsilon_{\ell}\mathbf G^{(r)}_{\ell}\big)$, and evaluate the tentative sum rate
$R_{\rm ten}
=R_{\rm MRC}\big(\mathbf Q_{\rm ten},\mathbf a^{(r)}\big)$;
\WHILE{$R_{\rm ten}\le R^{(r)}_{\ell-1}$}
\STATE  Update the step size
$\epsilon_{\ell}\leftarrow\tau_{\epsilon}\epsilon_{\ell}$, recompute the tentative positions
$\mathbf Q_{\rm ten}
=\Pi_{\mathcal Q}
\big(\mathbf Q^{(r)}_{\ell-1}
+\epsilon_{\ell}\mathbf G^{(r)}_{\ell}\big)$ and the tentative
objective
$R_{\rm ten}
=R_{\rm MRC}\big(\mathbf Q_{\rm ten},\mathbf a^{(r)}\big)$;
\ENDWHILE
\STATE Update the positions
$\mathbf Q^{(r)}_{\ell}=\mathbf Q_{\rm ten}$, the objective
$R^{(r)}_{\ell}=R_{\rm ten}$, and the relative increase
$\Delta R=\big(R^{(r)}_{\ell}-R^{(r)}_{\ell-1}\big)\big/
R^{(r)}_{\ell}$;
\ENDWHILE
\STATE Set the updated optical-position matrix as
$\mathbf Q^{(r)}\leftarrow\mathbf Q^{(r)}_{\ell}$.
\end{algorithmic}
\end{algorithm}

\subsubsection{LO Design}
Similarly, with the given optical design $\mathbf Q^{(r)}$ at the \(r\)-th iteration, we have
\begin{equation}\small
\frac{
\partial
[\mathbf H(\mathbf Q^{(r)},\mathbf a)]_{m,k}
}{
\partial \varkappa
}
=
w_{{\rm opt},m,k}(\mathbf q_m^{(r)})
h_{m,k}^{\rm cen}
\frac{
\partial w_{{\rm ce},m,k}
(\mathbf q_m^{(r)},\mathbf a)
}{
\partial \varkappa
},
\label{eq:LO_channel_derivative_decomposition}
\end{equation}
where \(\varkappa \in \{\beta_p,\varphi_p\}\), \(\forall p \in \{1,\cdots,P\}\).
With the expression given in \eqref{eq:w_cell_transduction}, we have
\begin{equation}\small
\begin{aligned}
\frac{
\partial w_{{\rm ce},m,k}
}{
\partial \varkappa
}
&=
G_{\rm amp}R_0\mathcal R_{\rm pd}
k_pD_{\Omega}L\frac{\mu_{34}}{\hbar}
P_{0,m}^{c}(\mathbf{q}^{(r)}_{m},\mathbf{a})
\\
&\times
\Bigg\{
\sinc
\Big[
\frac{\kappa_{m,k}^{c}(\mathbf{q}^{(r)}_{m},\mathbf{a})L}{2}
\Big]
\Big[
k_pD_{\Omega}L
\big(g_m^{c}(\mathbf{q}^{(r)}_{m},\mathbf{a})\big)^2
\\
&+ g_m^{c\prime}(\mathbf{q}^{(r)}_{m},\mathbf{a})
\Big]
\frac{
\partial\Omega_{{\rm LO},m}^{c}(\mathbf{q}^{(r)}_{m},\mathbf{a})
}{
\partial \varkappa
}
-
\frac{L}{2}
g_m^{c}(\mathbf{q}^{(r)}_{m},\mathbf{a})\\
&\times 
\sinc'
\Big[
\frac{\kappa_{m,k}^{c}(\mathbf{q}^{(r)}_{m},\mathbf{a})L}{2}
\Big]
\frac{
\partial\eta_{{\rm LO},m}^{c}(\mathbf{q}^{(r)}_{m},\mathbf{a})
}{
\partial \varkappa
}
\Bigg\},
\end{aligned}
\label{eq:LO_cell_transduction_derivative}
\end{equation}
where \(g_m^{c\prime}(\mathbf{q}^{(r)}_{m},\mathbf{a})\), \(\frac{
\partial\Omega_{{\rm LO},m}^{c}(\mathbf{q}^{(r)}_{m},\mathbf{a})
}{
\partial \varkappa
}\), and \(\frac{
\partial\eta_{{\rm LO},m}^{c}(\mathbf{q}^{(r)}_{m},\mathbf{a})
}{
\partial \varkappa
}\) can be found in Appendix \ref{app:LO_gradient_derivation}.

Similarly, with the given gradient, we can adopt the projected gradient ascent algorithm to maximize the objective function, as detailed in Algorithm \ref{alg:mrc_lo_ascent}.

\begin{algorithm}[t]
\caption{Projected-Gradient-Ascent LO Design for MRC Sum-Rate
Maximization}
\label{alg:mrc_lo_ascent}
\small
\begin{algorithmic}[1]
\STATE Initialize $\ell=0$, the LO amplitude and phase vectors
$\boldsymbol\beta^{(r)}_{0}=\boldsymbol\beta^{(r-1)}$ and
$\boldsymbol\varphi^{(r)}_{0}=\boldsymbol\varphi^{(r-1)}$, namely
$\mathbf a^{(r)}_{0}
=\boldsymbol\beta^{(r)}_{0}\odot
e^{{\rm j}\boldsymbol\varphi^{(r)}_{0}}
=\mathbf a^{(r-1)}$, the initial step size $\epsilon$, the
step-reduction factor $0<\tau_{\epsilon}<1$, and the stopping
tolerance $\varepsilon_{\rm stop}$.
\STATE Compute
$R^{(r)}_{0}
=R_{\rm MRC}\big(\mathbf Q^{(r)},\mathbf a^{(r)}_{0}\big)$, and set
the relative increase $\Delta R=+\infty$.
\WHILE{$\Delta R>\varepsilon_{\rm stop}$}
\STATE Update $\ell\leftarrow\ell+1$ and set
$\epsilon_{\ell}=\epsilon$;
\STATE Calculate the ascent directions
$\mathbf g^{(r)}_{\beta,\ell}
=\nabla_{\boldsymbol\beta}R_{\rm MRC}
\big(\mathbf Q^{(r)},\mathbf a^{(r)}_{\ell-1}\big)$ and
$\mathbf g^{(r)}_{\varphi,\ell}
=\nabla_{\boldsymbol\varphi}R_{\rm MRC}
\big(\mathbf Q^{(r)},\mathbf a^{(r)}_{\ell-1}\big)$;
\STATE Calculate the tentative LO amplitude and phase vectors
$\boldsymbol\beta_{\rm ten}
=\Pi_{\mathcal B}
\big(\boldsymbol\beta^{(r)}_{\ell-1}
+\epsilon_{\ell}\mathbf g^{(r)}_{\beta,\ell}\big)$ and
$\boldsymbol\varphi_{\rm ten}
=\boldsymbol\varphi^{(r)}_{\ell-1}
+\epsilon_{\ell}\mathbf g^{(r)}_{\varphi,\ell}$, where
$\Pi_{\mathcal B}(\mathbf x)
=[\mathbf x]_{+}/\max\{1,\|[\mathbf x]_{+}\|_2\}$; then construct
$\mathbf a_{\rm ten}
=\boldsymbol\beta_{\rm ten}\odot
e^{{\rm j}\boldsymbol\varphi_{\rm ten}}$ and evaluate the tentative
objective
$R_{\rm ten}
=R_{\rm MRC}\big(\mathbf Q^{(r)},\mathbf a_{\rm ten}\big)$;
\WHILE{$R_{\rm ten}\le R^{(r)}_{\ell-1}$}
\STATE Update the step size
$\epsilon_{\ell}\leftarrow\tau_{\epsilon}\epsilon_{\ell}$ and recompute
$\boldsymbol\beta_{\rm ten}$,
$\boldsymbol\varphi_{\rm ten}$,
$\mathbf a_{\rm ten}$, and $R_{\rm ten}$ as in Step 6;
\ENDWHILE
\STATE Update the LO vectors
$\boldsymbol\beta^{(r)}_{\ell}=\boldsymbol\beta_{\rm ten}$,
$\boldsymbol\varphi^{(r)}_{\ell}=\boldsymbol\varphi_{\rm ten}$, and
$\mathbf a^{(r)}_{\ell}=\mathbf a_{\rm ten}$, the objective
$R^{(r)}_{\ell}=R_{\rm ten}$, and the relative increase
$\Delta R=\big(R^{(r)}_{\ell}-R^{(r)}_{\ell-1}\big)\big/
R^{(r)}_{\ell}$;
\ENDWHILE
\STATE Set the updated LO excitation vector as
$\mathbf a^{(r)}\leftarrow\mathbf a^{(r)}_{\ell}$.
\end{algorithmic}
\end{algorithm}

\subsubsection{AO Optimization} Given the updated optical positions and LO excitation, the MRC combiner
is updated as
$\mathbf W_{\rm MRC}(\mathbf Q,\mathbf a)
=\boldsymbol\Phi^{c}_{\rm LO}(\mathbf Q,\mathbf a)
\mathbf H(\mathbf Q,\mathbf a)$.
Algorithm~\ref{alg:mrc_optical_ascent} and
Algorithm~\ref{alg:mrc_lo_ascent} are executed alternately until the
sum rate converges.

\subsection{Convergence and Complexity Analysis}

The projection operation guarantees that every updated optical position and LO design remains feasible. Moreover, the backtracking procedure is adopted to ensure a sufficient increase in the MRC sum rate. Consequently, the generated objective sequence is monotonically non-decreasing, ensuring its stable convergence.

As for the complexity of our proposed algorithm, it mainly depends on the number of iterations and the complexity of each block. 
For the optical block, computing the position derivatives of the
superimposed LO and the equivalent channels requires
$\mathcal O(MP+MK^2)$ operations. For the LO block, evaluating the closed-form gradients with respect to the amplitudes and phases of all $P$ LO elements requires $\mathcal O(MPK+MK^2)$ operations. Consequently, the overall complexity of the proposed alternating
optimization algorithm is
\(
\mathcal O\!\Big(
I_{\rm AO}[I_{\rm LO}(MPK+MK^2)
+
I_{\rm Q}(MP+MK^2)]
\Big)
\), where \(I_{\rm AO}\), \(I_{\rm LO}\), and \(I_{\rm Q}\) denote the number of iterations in alternating optimization, LO block, and optical block, respectively.

\section{Simulation Results}
In this section, we first validate our derived closed-form expressions and then evaluate the feasibility of the optically movable antenna by designing the laser coupling position. Furthermore, we evaluate the effectiveness of the proposed algorithm. 
\subsection{Parameter Settings}
Consider a four-level excitation scheme 
\(6{\rm S}_{1/2}\rightarrow 6{\rm P}_{3/2}\rightarrow 47{\rm D}_{5/2}\rightarrow 48{\rm P}_{3/2}\). 
Unless otherwise stated, the detailed parameter settings are consistent with those in \cite{peng2026celllevelchannelshapingrydberg}. The receiver employs an 
\(M_x\times M_y = 4\times4\) array of vapor cells, each of length \(L = 4\) cm, 
with inter-cell spacings along the \(x\)- and \(y\)-axes set to 
\(d_x = d_y = \lambda_c/2\). The radius of each vapor cell is \(R_{\rm VC} = \lambda_c/2\). The RF channel of user \(k\) is modeled as a Rician 
channel, given by
\[
\mathbf{h}_{{\rm RF},k} = \sqrt{\tfrac{R_{{\rm R},k} L_{{\rm loss},k}}{R_{{\rm R},k}+1}}\,
\bar{\mathbf{h}}_{{\rm RF},k} 
+\sqrt{\tfrac{L_{{\rm loss},k}}{R_{{\rm R},k}+1}}\,
\tilde{\mathbf{h}}_{{\rm RF},k} \in \mathbb{C}^{M \times 1},
\]
where \(R_{{\rm R},k}\) is the Rician factor and 
\(L_{{\rm loss},k} = P_{t,k}G_k\left(\tfrac{\lambda_c}{4\pi d_{{\rm pr},k}}\right)^2\) 
denotes the large-scale fading of user \(k\). \(d_{{\rm pr},k}\) denotes the distance between the user \(k\) and the RAQR. \(\bar{\mathbf{h}}_{{\rm RF},k}\) is the deterministic LoS steering vector determined by the angle of arrival of user \(k\), and \(\tilde{\mathbf{h}}_{{\rm RF},k}\sim\mathcal{CN}(\mathbf{0},\mathbf{I}_{M})\) is the NLoS component, which is independent of \(\mathbf{q}_m\), \(\forall m\). Similarly, the electric-field channel is 
expressed as
\[
\mathbf{h}_k = E_k\left(\sqrt{\tfrac{R_{{\rm R},k}}{R_{{\rm R},k}+1}}\,
\bar{\mathbf{h}}_{{\rm RF},k} 
+\sqrt{\tfrac{1}{R_{{\rm R},k}+1}}\,
\tilde{\mathbf{h}}_{{\rm RF},k}\right) \in \mathbb{C}^{M \times 1},
\]
where \(E_k\) denotes the electric-field amplitude from the user \(k\) to the RAQR.

\subsection{Validation of Derived Results}
Fig.~\ref{MSEout} compares the MSE between the derived closed-form expressions and numerical results. As shown in Fig.~\ref{MSEout}(a), although the approximation error increases toward the vapor cell's boundary, all approximation errors remain below -20 dB, validating the accuracy of the center approximation. Furthermore, Fig.~\ref{MSEout}(b) presents the AC voltage outputs of different vapor cells, which further confirms the correctness of the derived analytical expressions.
\begin{figure*}[t]
	\centering
	\subfigure[MSE of center-based approximation.]{
		\includegraphics[width=3.25in]{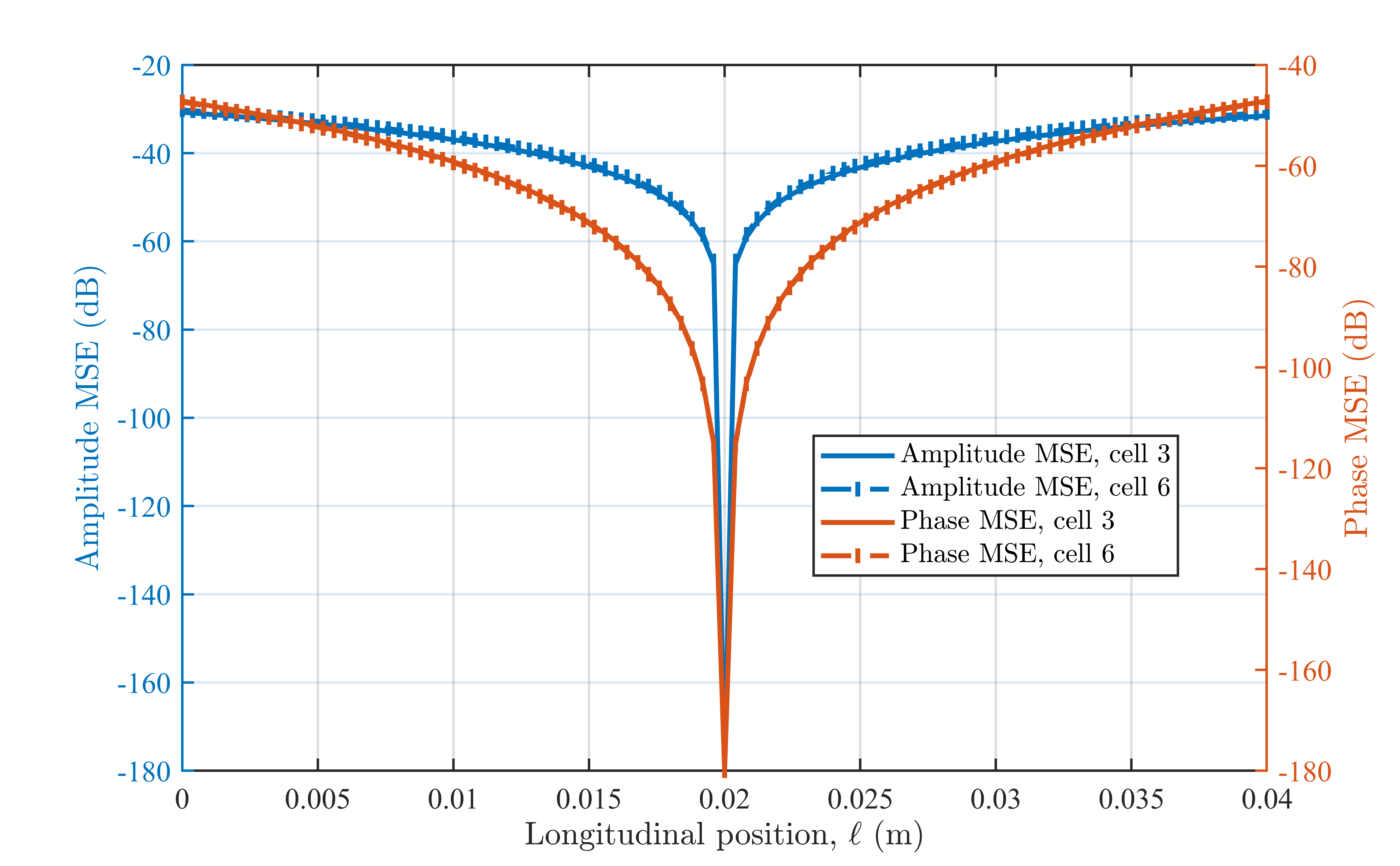}}\hspace{10mm}
	\subfigure[Output voltage.]{
		\includegraphics[width=3.25in]{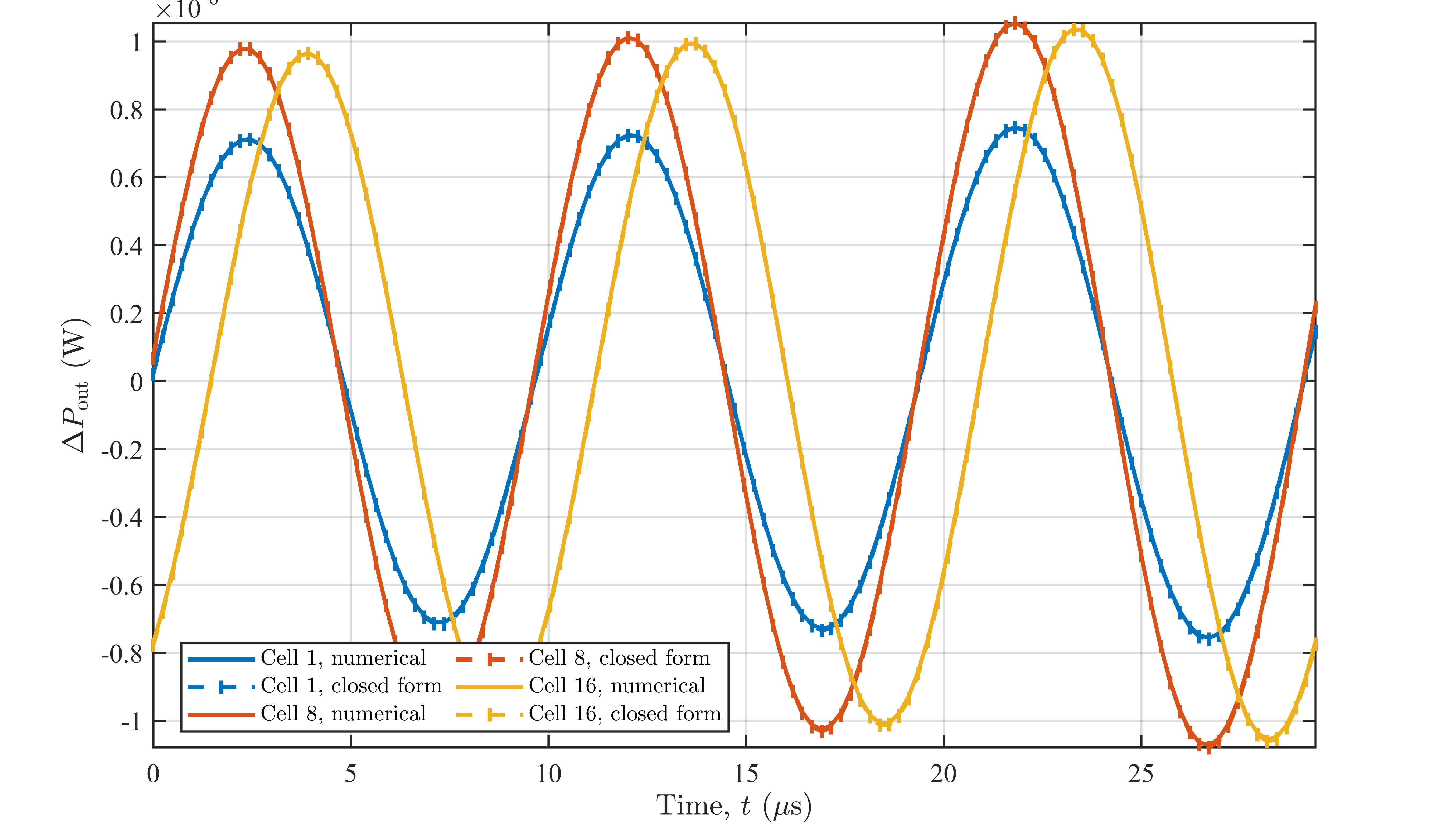}}
	\caption{Mean square error based on numerical results and derived expressions.}
    \label{MSEout}
    \vspace{-0.5cm}
\end{figure*}

\subsection{Capability of Movable RAQR}
We evaluate the benefits of the proposed LO and optical-position designs. With the fixed optical position, \(\mathbf{q}_m = \mathbf{0}\), Fig. \ref{Capabilities}(a) and Fig. \ref{Capabilities}(b) show the normalized array gain over the angular grid $(\phi,\theta)\in[-180^\circ,180^\circ]\times[0^\circ,90^\circ]$, where $\theta$ and $\phi$ denote the zenith and azimuth angles, respectively. Here, $\mathbf u_{\rm T}$ and $\mathbf u_{\rm I}$ represent the directions of the target and interference users. Compared with the conventional array, the RF-to-optical conversion suppresses the interference level from \(-19.1\) dB to \(-33.3\) dB, confirming Remark \ref{remark1}. Next, with the LO design fixed, Fig.~\ref{Capabilities}(c) presents the beam pattern obtained by optimizing the optical positions. The interference is further reduced by approximately 87 dB, demonstrating the strong beamforming capability enabled by optical displacement. Finally, Fig.~\ref{Capabilities}(d) compares the normalized channel correlation before and after RF-to-optical conversion. The significant reduction in correlation validates Remark~\ref{prop:decorrelation_2d} and confirms the capability of optically movable RAQR.

\begin{figure*}
    \centering
    \includegraphics[width=1\linewidth]{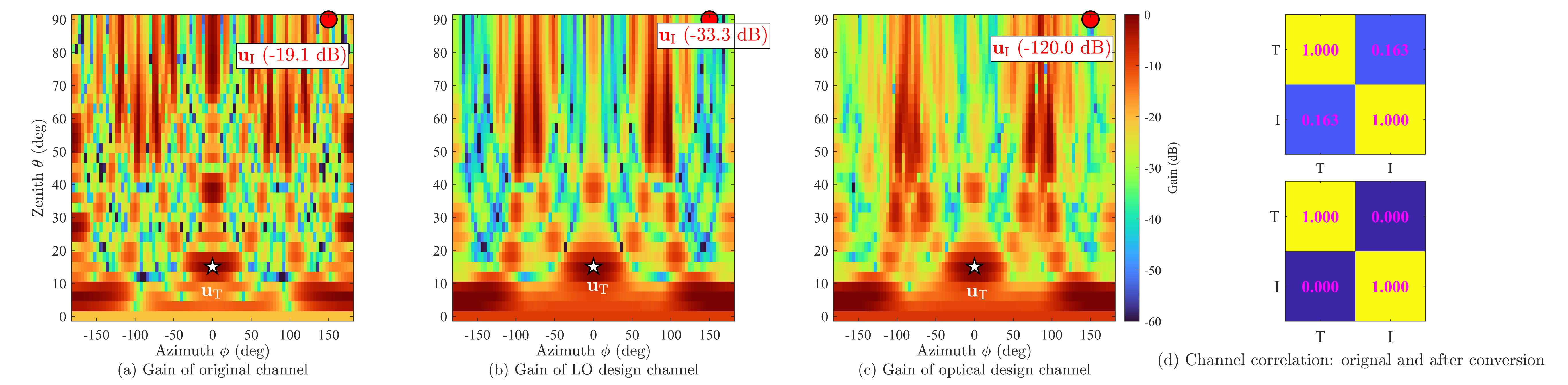}
    \caption{Capabilities of movable RAQR.}
    \label{Capabilities}
     \vspace{-0.5cm}
\end{figure*}

\subsection{Performance Evaluation of Our Algorithm}
We evaluate the convergence of our proposed algorithm in Fig.~\ref{convergencr}. As observed, the proposed algorithm reaches a stationary operating point within 4–5 outer AO iterations, corresponding to fewer than 500 iterations in total, demonstrating the rapid convergence of our method.

\begin{figure}[t]
    \centering
    \includegraphics[width=0.9\linewidth]{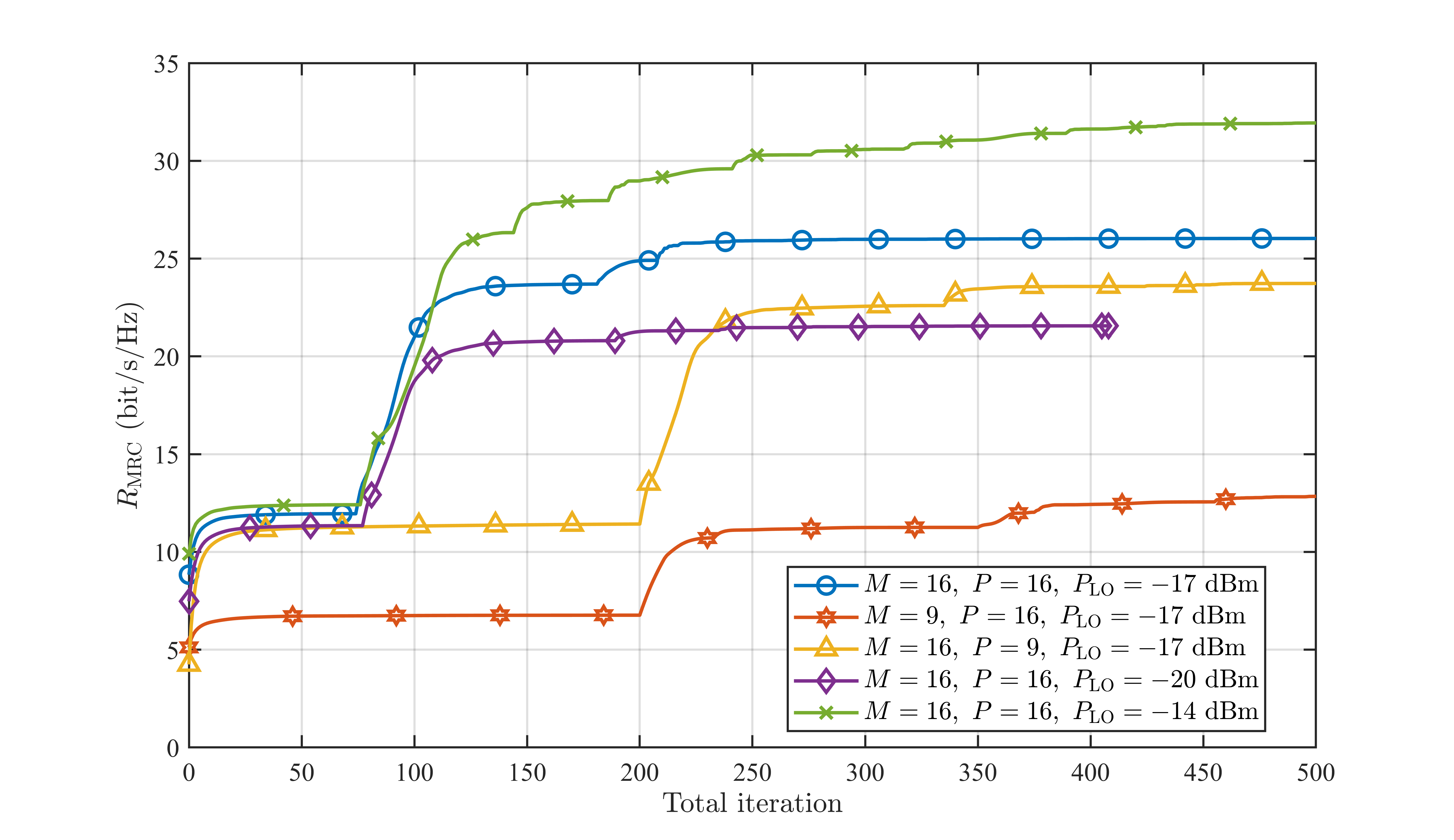}
    \caption{Convergence of our alternating algorithm.}
    \label{convergencr}
     \vspace{-0.5cm}
\end{figure}

We then compare our methods with the following benchmarks.
\begin{enumerate}
    \item \textbf{Benchmark 1}: The optical positions and the LO coefficients are randomly generated within their corresponding
    feasible sets;
    \item \textbf{Benchmark 2}: The genetic algorithm is employed to jointly optimize the LO and optical position;
    \item \textbf{Benchmark 3}: We fix the optical position and optimize the LO by using our gradient ascent algorithm \cite{peng2026celllevelchannelshapingrydberg};
     \item \textbf{Benchmark 4}: The optical position is designed based on the gradient ascent algorithm with a fixed LO;
    \item \textbf{Benchmark 5}: Assuming that there is a far-field LO, the LO and optical position are optimized by using our method;
    \item \textbf{Benchmark 6}: We utilize an array with the same number of conventional fixed-position RF antennas as a baseline.
\end{enumerate}

In Fig. \ref{rateCDF}, we depict the cumulative distribution function (CDF) of sum rate by averaging 100 trials. We assume that users are randomly distributed within an area of 100 \(\rm{km}^2\). As observed, the proposed algorithm approaches the sum-rate performance achieved by the GA while requiring fewer objective evaluations, but the computational complexity of our proposed algorithm is lower than that of the GA, thereby demonstrating its effectiveness. Furthermore, compared to methods that optimize only the LO or optical position, our proposed algorithm reduces interference between users through conversion gain and phase matching, resulting in improved data rates. Compared to a fixed far-field LO, a near-field LO provides a more programmable conversion gain or equivalent channel design, thereby enhancing performance.

\begin{figure}
    \centering
    \includegraphics[width=0.9\linewidth]{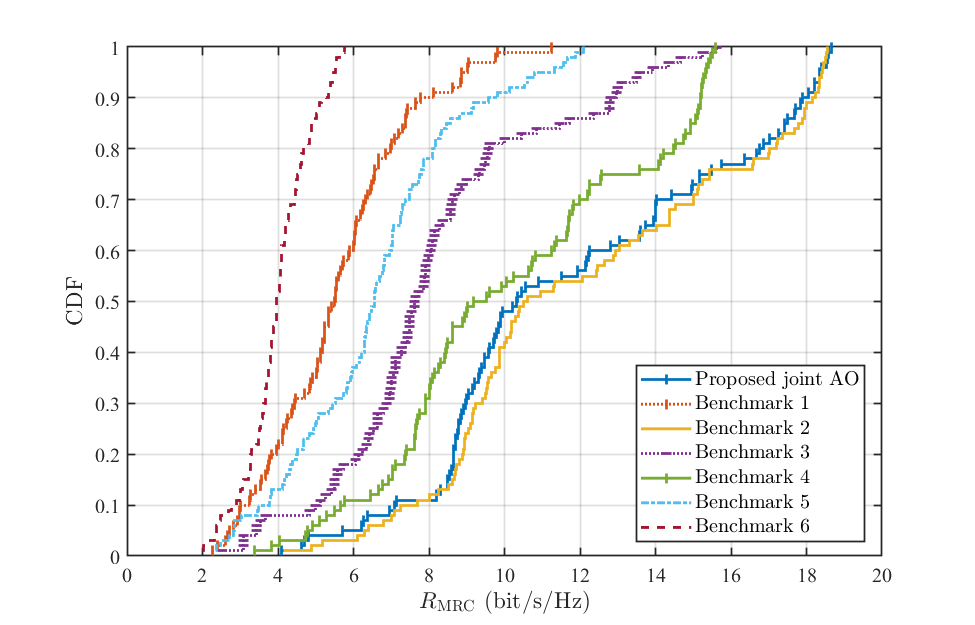}
    \caption{CDF of sum rate based on MRC detection.}
    \label{rateCDF}
     \vspace{-0.5cm}
\end{figure}

Assuming that  \(K\)  users are randomly distributed within 100 \(\rm{km}^2\), Fig. \ref{ratevsk} illustrates the impact of the number of users on the sum rate by averaging over 100 random channel realizations. We find that the sum rate initially increases as the number of users, and then decreases slightly. This occurs because spatially separating users using the available DoFs becomes challenging when the density of users increases, thereby leading to performance degradation.  Importantly, the proposed algorithm outperforms the benchmarks, validating the effectiveness of our algorithm.

\section{Conclusions}

This paper developed an optically movable RAQR architecture by steering the probe–coupling overlap region to dynamically reconfigure the effective RF sensing position without mechanical actuation. A closed-form equivalent baseband model was derived, and its accuracy was validated against numerical solutions of the Lindblad master equation. Based on the derived model, we characterized two complementary channel-shaping mechanisms of movable RAQRs, namely intrinsic beam-pattern shaping through RF-to-optical transduction and per-cell phase control enabled by optical displacement. Numerical results demonstrated that RF-to-optical transduction and optical displacement suppress interference by 14 dB and 87 dB, respectively. To further evaluate the system-level benefits, an MRC-based sum-rate maximization problem was solved via an alternating optimization algorithm with analytical gradients. Simulation results confirmed significant user-decorrelation and sum-rate improvements enabled by optical movability, highlighting its potential as a programmable receiver architecture for future wireless networks.

\begin{figure}
    \centering
    \includegraphics[width=0.9\linewidth]{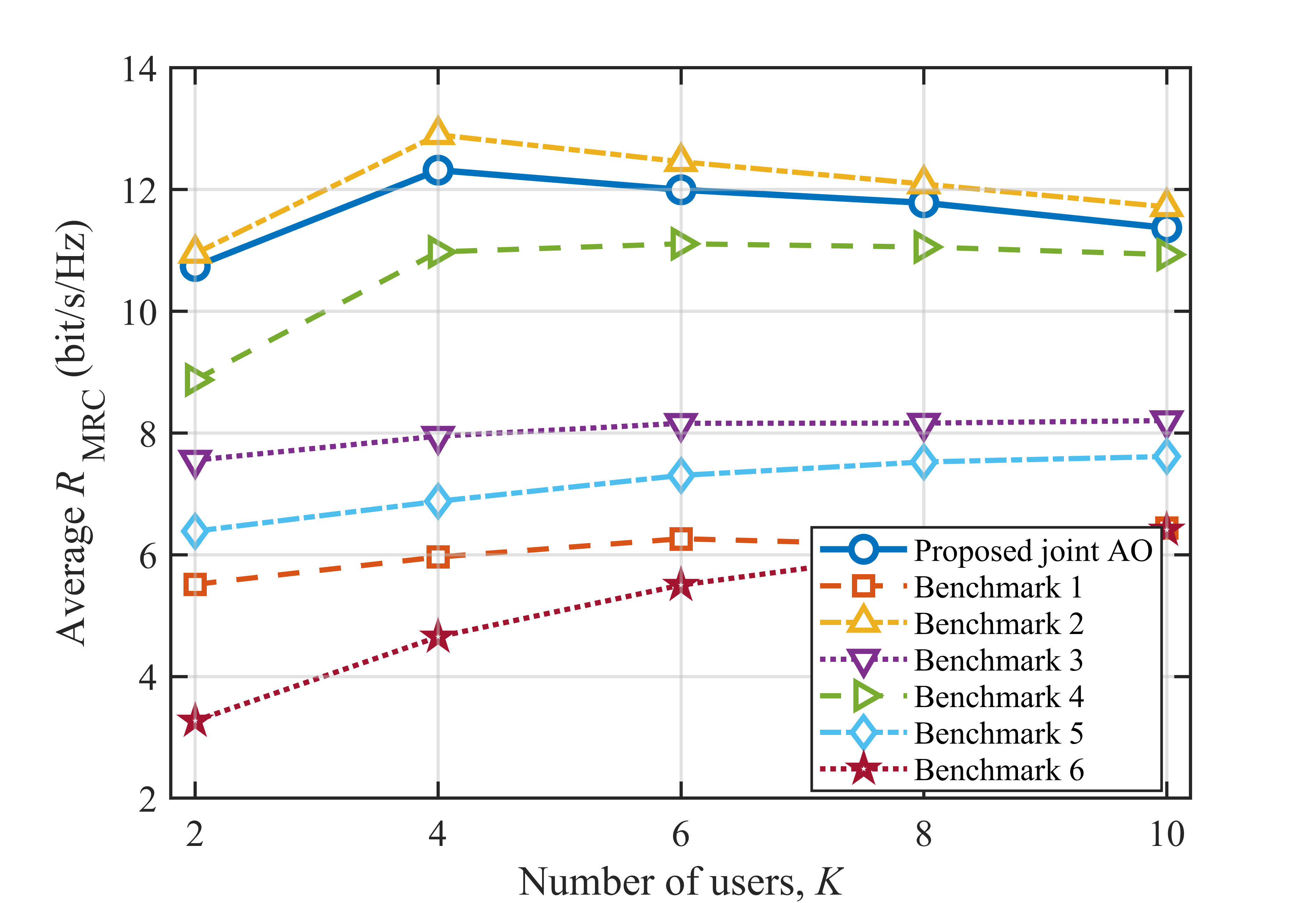}
    \caption{Sum rate under various numbers of users.}
    \label{ratevsk}
   \vspace{-0.5cm}
\end{figure}
\appendices
\section{Proof of Lemma~\ref{lem:mrc_gradient}}
\label{app:mrc_gradient_proof}
Using the chain rule, the partial derivative of the objective function is
\begin{equation}\small
\begin{split}
        &\frac{\partial R_{\rm MRC}(\mathbf Q,\mathbf a)}{\partial\varkappa}
    = \frac{1}{\ln2}
    \sum_{k=1}^{K} \frac{1}{1+{\rm SINR}_k(\mathbf Q,\mathbf a)}\frac{\partial{\rm SINR}_k(\mathbf Q,\mathbf a) }{\partial\varkappa}\\
    = &\frac{1}{\ln2}
    \sum_{k=1}^{K} \frac{\Upsilon_k(\mathbf Q,\mathbf a)}{\Upsilon_k(\mathbf Q,\mathbf a)+\hat\varrho_{kk}^2(\mathbf Q,\mathbf a)}\frac{\partial{\rm SINR}_k(\mathbf Q,\mathbf a) }{\partial\varkappa}\\
   = &  \sum_{k=1}^{K} 
    \Xi_k(\mathbf Q,\mathbf a)
    \Big[
    2\hat\varrho_{kk}(\mathbf Q,\mathbf a) \Upsilon_k(\mathbf Q,\mathbf a)
    \frac{\partial\hat\varrho_{kk}(\mathbf Q,\mathbf a)}{\partial\varkappa}\\
    -&
    \hat\varrho_{kk}^2(\mathbf Q,\mathbf a)
    \frac{\partial \Upsilon_k(\mathbf Q,\mathbf a)}{\partial \varkappa}
    \Big],
\end{split}
    \label{eq:mrc_rate_chain}
\end{equation}
where \( \Xi_k(\mathbf Q,\mathbf a) \) is given by \eqref{eq:mrc_scalar_defs}.
\( \frac{\partial \Upsilon_k(\mathbf Q,\mathbf a)}{\partial\varkappa}\) is given by 
\begin{equation}\small
\begin{split}
        \frac{\partial \Upsilon_k(\mathbf Q,\mathbf a)}{\partial\varkappa}
    &=
    2\sum_{j\neq k}
    \Re\Big\{
    \hat\varrho_{kj}^{*}(\mathbf Q,\mathbf a)
    \frac{\partial\hat\varrho_{kj}(\mathbf Q,\mathbf a)}{\partial\varkappa}
    \Big\}\\
    &+
    \sigma^2
    \frac{\partial\hat\varrho_{kk}(\mathbf Q,\mathbf a)}{\partial\varkappa}.
\end{split}
    \label{eq:mrc_dk_derivative}
\end{equation}
Substituting \eqref{eq:mrc_dk_derivative} into \eqref{eq:mrc_rate_chain}, we have
\begin{equation}\small
\begin{split}
&\frac{\partial R_{\rm MRC}(\mathbf Q,\mathbf a)}{\partial\varkappa}
=
\sum_{k=1}^{K}
\Xi_k(\mathbf Q,\mathbf a)
\hat\varrho_{kk}(\mathbf Q,\mathbf a)
S_k(\mathbf Q,\mathbf a)
\frac{\partial\hat\varrho_{kk}(\mathbf Q,\mathbf a)}{\partial\varkappa}\\
&-
2\sum_{k=1}^{K}
\Xi_k(\mathbf Q,\mathbf a)
\hat\varrho_{kk}^2(\mathbf Q,\mathbf a)
\sum_{j\neq k}
\Re\Big\{
\hat\varrho_{kj}^{*}(\mathbf Q,\mathbf a)
\frac{\partial\hat\varrho_{kj}(\mathbf Q,\mathbf a)}{\partial\varkappa}
\Big\},
\end{split}
\label{eq:mrc_two_terms}
\end{equation}
where \(S_k(\mathbf Q,\mathbf a)\) is given by \eqref{eq:mrc_scalar_defs}.
Note that \(S_k(\mathbf Q,\mathbf a)\) is strictly positive as
$\hat\varrho_{kk}(\mathbf Q,\mathbf a)
=\|\mathbf h_k(\mathbf Q,\mathbf a)\|_2^2>0$. 
With \( \hat\varrho_{kj}(\mathbf Q,\mathbf a) ={\mathbf h}_k^{\rm H}(\mathbf Q,\mathbf a){\mathbf h}_j(\mathbf Q,\mathbf a) \), we have
\begin{equation}\small
\begin{split}
        \frac{\partial\hat\varrho_{kj}(\mathbf Q,\mathbf a) }{\partial\varkappa}
    &=
    \frac{\partial{\mathbf h}_k^{\rm H}(\mathbf Q,\mathbf a) }{\partial\varkappa}
    {\mathbf h}_j(\mathbf Q,\mathbf a) 
    +
    {\mathbf h}_k^{\rm H}(\mathbf Q,\mathbf a) 
    \frac{\partial{\mathbf h}_j(\mathbf Q,\mathbf a) }{\partial\varkappa},\\
    \frac{\partial\hat\varrho_{kk}}{\partial\varkappa}
    &=
    2\Re\Big\{
    {\mathbf h}_k^{\rm H}(\mathbf Q,\mathbf a)
    \frac{\partial{\mathbf h}_k(\mathbf Q,\mathbf a)}{\partial\varkappa}
    \Big\}.
\end{split}
    \label{eq:mrc_gram_derivative}
\end{equation}

Then, substituting the
second line of \eqref{eq:mrc_gram_derivative} into \eqref{eq:mrc_two_terms},
the first summation is rewritten as
\begin{equation}\small
\begin{split}
&\sum_{k=1}^{K}
\Xi_k(\mathbf Q,\mathbf a)
\hat\varrho_{kk}(\mathbf Q,\mathbf a)
S_k(\mathbf Q,\mathbf a)
\frac{\partial\hat\varrho_{kk}(\mathbf Q,\mathbf a)}{\partial\varkappa}
\\
&=
\sum_{k=1}^{K}
\Re\Big\{
\Big[
2\Xi_k(\mathbf Q,\mathbf a)
\hat\varrho_{kk}(\mathbf Q,\mathbf a)
S_k(\mathbf Q,\mathbf a)\,
\mathbf h_k(\mathbf Q,\mathbf a)
\Big]^{\rm H}
\\
&\times
\frac{\partial\mathbf h_k(\mathbf Q,\mathbf a)}{\partial\varkappa}
\Big\}.
\end{split}
\label{eq:mrc_energy_collector}
\end{equation}
For the second summation of \eqref{eq:mrc_two_terms}, substituting the
first line of \eqref{eq:mrc_gram_derivative}, we obtain
\begin{equation}\small
\begin{split}
&\Re\Big\{
\hat\varrho_{kj}^{*}(\mathbf Q,\mathbf a)
\frac{\partial\hat\varrho_{kj}(\mathbf Q,\mathbf a)}{\partial\varkappa}
\Big\}
\\
&=
\Re\Big\{
\hat\varrho_{kj}^{*}(\mathbf Q,\mathbf a)
\frac{\partial\mathbf h_k^{\rm H}(\mathbf Q,\mathbf a)}
{\partial\varkappa}
\mathbf h_j(\mathbf Q,\mathbf a)
\Big\}
\\
&+
\Re\Big\{
\hat\varrho_{kj}^{*}(\mathbf Q,\mathbf a)
\mathbf h_k^{\rm H}(\mathbf Q,\mathbf a)
\frac{\partial\mathbf h_j(\mathbf Q,\mathbf a)}
{\partial\varkappa}
\Big\} \\
&\overset{(a)}{=} \Re\Big\{
\Big[
\hat\varrho_{kj}^{*}(\mathbf Q,\mathbf a)\,
\mathbf h_j(\mathbf Q,\mathbf a)
\Big]^{\rm H}
\frac{\partial\mathbf h_k(\mathbf Q,\mathbf a)}{\partial\varkappa}
\Big\} \\
&+ \Re\Big\{
\Big[
\hat\varrho_{kj}(\mathbf Q,\mathbf a)\,
\mathbf h_k(\mathbf Q,\mathbf a)
\Big]^{\rm H}
\frac{\partial\mathbf h_j(\mathbf Q,\mathbf a)}{\partial\varkappa}
\Big\},
\end{split}
\label{eq:mrc_interference_split}
\end{equation}
where \((a)\) is obtained by using $\Re\{x\}=\Re\{x^{*}\}$.


Therefore, every term of
$\partial R_{\rm MRC}(\mathbf Q,\mathbf a)/\partial\varkappa$ takes the
standard form
$\Re\{(\cdot)^{\rm H}
\partial\mathbf h_i(\mathbf Q,\mathbf a)/\partial\varkappa\}$. By
collecting the like terms that multiply the same
$\partial\mathbf h_i(\mathbf Q,\mathbf a)/\partial\varkappa$ and
merging the two off-diagonal streams through the Hermitian property
$\hat\varrho_{ji}(\mathbf Q,\mathbf a)
=\hat\varrho_{ij}^{*}(\mathbf Q,\mathbf a)$, we construct the collector
vector of user $i$ as
\begin{equation}\small
\begin{split}
\mathbf g_i(\mathbf Q,\mathbf a)
&=
2\Xi_i(\mathbf Q,\mathbf a)
\hat\varrho_{ii}(\mathbf Q,\mathbf a)
S_i(\mathbf Q,\mathbf a)\,
\mathbf h_i(\mathbf Q,\mathbf a)
\\
&-
2\sum_{j\neq i}
\Big[
\Xi_i(\mathbf Q,\mathbf a)
\hat\varrho_{ii}^2(\mathbf Q,\mathbf a) +
\Xi_j(\mathbf Q,\mathbf a)\\
&\times 
\hat\varrho_{jj}^2(\mathbf Q,\mathbf a)
\Big]
\hat\varrho_{ij}^{*}(\mathbf Q,\mathbf a)\,
\mathbf h_j(\mathbf Q,\mathbf a).
\end{split}
\label{eq:mrc_collector}
\end{equation}
Therefore, since \eqref{eq:mrc_collector} shows that \(\mathbf g_i(\mathbf Q,\mathbf a)\) is exactly the \(i\)-th column of \(\boldsymbol\Lambda(\mathbf Q,\mathbf a)\), we have
\begin{equation}\small
\begin{split}
    &\frac{\partial R_{\rm MRC}(\mathbf Q,\mathbf a)}{\partial\varkappa}
=\sum_{i=1}^{K}
\Re\{\mathbf g_i^{\rm H}(\mathbf Q,\mathbf a)\frac
{\partial\mathbf h_i(\mathbf Q,\mathbf a)}{\partial\varkappa}\} \\
&=\Re\Big\{
\sum_{m=1}^{M}
\sum_{k=1}^{K}
\big[
\boldsymbol\Lambda(\mathbf Q,\mathbf a)
\big]_{m,k}^{*}
\frac{\partial[\mathbf H(\mathbf Q,\mathbf a)]_{m,k}}
{\partial\varkappa}
\Big\},
\end{split}
\end{equation}
which completes the proof.

\section{Closed-Form Derivatives for the Optical Design}
\label{app:opt_gradient_proof}
Based on $\Omega_{{\rm LO},m}^{c}(\mathbf{q}_{m},\mathbf a)
=\frac{\mu_{34}}{\hbar}
|\mathcal A_{{\rm LO},m}^{c}(\mathbf{q}_{m},\mathbf a)|$,  its derivative with respect to \(\varkappa\), \(\varkappa\in \{y_m,z_m\}\), is
\begin{equation}\small
\begin{split}
    \frac{
\partial\Omega_{{\rm LO},m}^{c}(\mathbf{q}_{m},\mathbf a)
}{
\partial \varkappa
}
&=
\frac{\mu_{34}}
{\hbar
\left|
  \mathcal A_{{\rm LO},m}^{c}(\mathbf{q}_{m},\mathbf a)
\right|}
\Re
\Big\{
\big[
\mathcal A_{{\rm LO},m}^{c}(\mathbf{q}_{m},\mathbf a)
\big]^{*}\\
& \times 
\frac{
\partial\mathcal A_{{\rm LO},m}^{c}(\mathbf{q}_{m},\mathbf a)
}{
\partial \varkappa
}
\Big\},
\end{split}
\label{eq:opt_omega_derivative}
\end{equation}
where \(\frac{\partial\mathcal A_{{\rm LO},m}^{c}}{\partial\varkappa}\) is given by
\begin{equation}\small
\begin{aligned}
&\frac{
\partial\mathcal A_{{\rm LO},m}^{c}(\mathbf{q}_{m},\mathbf a)
}{
\partial \varkappa
}
=
-V_{\rm LO}
\sum_{p=1}^{P}
a_p^{(r)}
e^{-{\rm j}k_{\rm LO}R_{p,m}^{c}(\mathbf{q}_m)}\\
&\times \Bigg\{\frac{\Delta \varkappa_{p,m}^{c}}{[R_{p,m}^{c}(\mathbf{q}_m) ]^3} +
{\rm j}k_{\rm LO}
\frac{\Delta \varkappa_{p,m}^{c}}{[R_{p,m}^{c}(\mathbf{q}_m) ]^2}
\Bigg\}.
\end{aligned}
\label{eq:opt_lo_field_derivative}
\end{equation}
\(\Delta \varkappa_{p,m}^{c}\) is
\begin{equation}\small
\begin{split}
\begin{cases}
 (m_y-1)d_y + y_m -  y_{\rm LO}-(p_y-1)d_{{\rm LO},y},&\varkappa=y_m,\\
z_m -z_{\rm LO} ,&\varkappa=z_m.
\end{cases}
\end{split}
\end{equation}

\(\sinc'(x)\) and \(u_{\varkappa,k}\) are given by
\begin{equation}\small
\sinc'(x)
=
\begin{cases}
\dfrac{x\cos(x)-\sin(x)}{x^2},
&x\neq0,
\\[2mm]
0,
&x=0.
\end{cases}
\label{eq:opt_sinc_derivative}
\end{equation}
and 
\begin{equation}\small
u_{\varkappa,k}
=
\begin{cases}
u_{y,k},&\varkappa=y_m,\\
u_{z,k},&\varkappa=z_m.
\end{cases}
\label{eq:opt_direction_component}
\end{equation}
\(\frac{
\partial\eta_{{\rm LO},m}^{c}(\mathbf{q}_{m},\mathbf a)
}{
\partial \varkappa
}\) is given by
\begin{equation}\small
\begin{aligned}
\frac{
\partial\eta_{{\rm LO},m}^{c}(\mathbf{q}_{m},\mathbf a)
}{
\partial \varkappa
}
&=
{\rm Im}
\Big\{
\frac{
\partial\dot{\mathcal A}_{{\rm LO},m}^{c}(\mathbf{q}_{m},\mathbf a)/\partial \varkappa
}{
\mathcal A_{{\rm LO},m}^{c}(\mathbf{q}_{m},\mathbf a)
}
-\dot{\mathcal A}_{{\rm LO},m}^{c}(\mathbf{q}_{m},\mathbf a)\\
&\times
\frac{
\left(
\partial\mathcal A_{{\rm LO},m}^{c}(\mathbf{q}_{m},\mathbf a)/\partial \varkappa
\right)
}{
\big[
\mathcal A_{{\rm LO},m}^{c}(\mathbf{q}_{m},\mathbf a)
\big]^2
}
\Big\},
\end{aligned}
\label{eq:opt_eta_derivative}
\end{equation}
where \(\dot{\mathcal A}_{{\rm LO},m}^{c}(\mathbf{q}_{m},\mathbf a)\) and \(\frac{
\partial\dot{\mathcal A}_{{\rm LO},m}^{c}(\mathbf{q}_{m},\mathbf a)
}{
\partial \varkappa
}\) are given by 
\begin{equation}\small
\begin{aligned}
\dot{\mathcal A}_{{\rm LO},m}^{c}(\mathbf{q}_{m},\mathbf a)
&=
-V_{\rm LO}
\sum_{p=1}^{P}
a_p
e^{-{\rm j}k_{\rm LO}R_{p,m}^{c}(\mathbf{q}_m)}\\
&\times 
\left[
\frac{\Delta x_{p,m}}{[R_{p,m}^{c}(\mathbf{q}_m) ]^3}
+
{\rm j}k_{\rm LO}
\frac{\Delta x_{p,m}}{[R_{p,m}^{c}(\mathbf{q}_m) ]^2}
\right],
\end{aligned}
\label{eq:LO_longitudinal_field_recalled}
\end{equation}
and
\begin{equation}\small
\begin{aligned}
&\frac{
\partial\dot{\mathcal A}_{{\rm LO},m}^{c}(\mathbf{q}_{m},\mathbf a)
}{
\partial \varkappa
}
=
V_{\rm LO}
\sum_{p=1}^{P}
a_p^{(r)}
\Delta x_{p,m}
\Delta \varkappa_{p,m}^{c}
e^{-{\rm j}k_{\rm LO}R_{p,m}^{c}(\mathbf{q}_m)}
\\
&\times
\Bigg\{
\frac{3}{[R_{p,m}^{c}(\mathbf{q}_m) ]^5}
+
\frac{3{\rm j}k_{\rm LO}}{[R_{p,m}^{c}(\mathbf{q}_m) ]^4}-
\frac{k_{\rm LO}^2}{[R_{p,m}^{c}(\mathbf{q}_m) ]^3}
\Bigg\}.
\end{aligned}
\label{eq:opt_lo_derivative_derivative}
\end{equation}

\section{Closed-Form Derivatives for the LO Design}
\label{app:LO_gradient_derivation}

This appendix provides the closed-form expressions of
$g_m^{c\prime}(\mathbf{q}_{m},\mathbf a)$,
\(\frac{
\partial\Omega_{{\rm LO},m}^{c}(\mathbf{q}^{(r)}_{m},\mathbf{a})
}{
\partial \varkappa
}\), and
$\frac{\partial\eta_{{\rm LO},m}^{c}(\mathbf{q}_{m},\mathbf a)}{\partial\varkappa}$, where
$\varkappa\in\{\beta_p,\varphi_p\}$. All quantities are evaluated at
the fixed optical position $\mathbf q_m=\mathbf q_m^{(r)}$.

$g_m^{c\prime}(\mathbf{q}_{m},\mathbf a)$ is given by

\begin{equation}\small
\begin{aligned}
&g_m^{c\prime}(\mathbf{q}_{m},\mathbf a)
=
-\frac{2\Omega_p}
{
\left[
C_1(\Omega_m^{c})^4
+
C_2(\Omega_m^{c})^2
+
C_3
\right]^3
}\\
&\times
\Big\{
\Big[
5(B_1C_2-B_2C_1)(\Omega_m^{c})^4+
6(B_1C_3-B_3C_1)(\Omega_m^{c})^2
\\
&+
(B_2C_3-B_3C_2)
\Big]
\left[
C_1(\Omega_m^{c})^4
+
C_2(\Omega_m^{c})^2
+
C_3
\right]
\\
&-
4(\Omega_m^{c})^2
\left[
2C_1(\Omega_m^{c})^2+C_2
\right]
\Big[
(B_1C_2-B_2C_1)(\Omega_m^{c})^4
\\
&
+
2(B_1C_3-B_3C_1)(\Omega_m^{c})^2
+
(B_2C_3-B_3C_2)
\Big]
\Big\},
\end{aligned}
\label{eq:app_LO_g_prime}
\end{equation}
where \(\Omega_m^{c}\triangleq \Omega_{{\rm LO},m}^{c}(\mathbf{q}^{(r)}_{m},\mathbf a)\).

\(\frac{
\partial\Omega_m^{c}
\left(
\mathbf q_m^{(r)},\mathbf a
\right)
}{
\partial\beta_p
}\) is given by \eqref{omega_p}, at the bottom of this page.
\begin{figure*}[hb]
    \hrulefill
    \begin{equation}\small
    \label{omega_p}
    \begin{aligned}
    \frac{
\partial\Omega_m^{c}
\left(
\mathbf q_m^{(r)},\mathbf a
\right)
}{\partial\varkappa} =\begin{cases}
\displaystyle
\frac{
\mu_{34}V_{\rm LO}
}{
\hbar
R_{p,m}^{c}(\mathbf q_m^{(r)})
\left|
\mathcal A_{{\rm LO},m}^{c}
\left(
\mathbf q_m^{(r)},\mathbf a
\right)
\right|
}
{\rm Re}
\Bigg\{
\left[
\mathcal A_{{\rm LO},m}^{c}
\left(
\mathbf q_m^{(r)},\mathbf a
\right)
\right]^*
e^{{\rm j}
\left[
\varphi_p
-
k_{\rm LO}R_{p,m}^{c}(\mathbf q_m^{(r)})
\right]}
\Bigg\},
&\varkappa=\beta_p,
\\[7mm]
\displaystyle
-
\frac{
\mu_{34}V_{\rm LO}\beta_p
}{
\hbar
R_{p,m}^{c}(\mathbf q_m^{(r)})
\left|
\mathcal A_{{\rm LO},m}^{c}
\left(
\mathbf q_m^{(r)},\mathbf a
\right)
\right|
}
{\rm Im}
\Bigg\{
\left[
\mathcal A_{{\rm LO},m}^{c}
\left(
\mathbf q_m^{(r)},\mathbf a
\right)
\right]^*
e^{{\rm j}
\left[
\varphi_p
-
k_{\rm LO}R_{p,m}^{c}(\mathbf q_m^{(r)})
\right]}
\Bigg\},
&\varkappa=\varphi_p.
\end{cases}
\end{aligned}
\end{equation}
\hrulefill
\begin{equation}\small
\begin{aligned}
\frac{
\partial
\dot{\mathcal A}_{{\rm LO},m}^{c}
\left(
\mathbf q_m^{(r)},\mathbf a
\right)
}{
\partial\varkappa
}
&=
\begin{cases}
\displaystyle
V_{\rm LO}
e^{{\rm j}
\left[
\varphi_p
-
k_{\rm LO}R_{p,m}^{c}(\mathbf q_m^{(r)})
\right]}
\left[
-\frac{\Delta x_{p,m}}
{\left[R_{p,m}^{c}(\mathbf q_m^{(r)})\right]^3}
-
{\rm j}k_{\rm LO}
\frac{\Delta x_{p,m}}
{\left[R_{p,m}^{c}(\mathbf q_m^{(r)})\right]^2}
\right],
&\varkappa=\beta_p,
\\[7mm]
\displaystyle
{\rm j}V_{\rm LO}\beta_p
e^{{\rm j}
\left[
\varphi_p
-
k_{\rm LO}R_{p,m}^{c}(\mathbf q_m^{(r)})
\right]}
\left[
-\frac{\Delta x_{p,m}}
{\left[R_{p,m}^{c}(\mathbf q_m^{(r)})\right]^3}
-
{\rm j}k_{\rm LO}
\frac{\Delta x_{p,m}}
{\left[R_{p,m}^{c}(\mathbf q_m^{(r)})\right]^2}
\right],
&\varkappa=\varphi_p.
\end{cases}
\end{aligned}
\label{eq:app_LO_dot_A_derivative}
\end{equation}
\end{figure*}

\(\frac{
\partial\eta_{{\rm LO},m}^{c}
\left(
\mathbf q_m^{(r)},\mathbf a
\right)
}{
\partial\varkappa
}\) is
\begin{equation}\small
\begin{aligned}
&\frac{
\partial\eta_{{\rm LO},m}^{c}
\left(
\mathbf q_m^{(r)},\mathbf a
\right)
}{
\partial\varkappa
}
=
{\rm Im}
\Bigg\{
\frac{
\dfrac{
\partial
\dot{\mathcal A}_{{\rm LO},m}^{c}
\left(
\mathbf q_m^{(r)},\mathbf a
\right)
}{
\partial\varkappa
}
}{
\mathcal A_{{\rm LO},m}^{c}
\left(
\mathbf q_m^{(r)},\mathbf a
\right)
}
\\
&\qquad-
\frac{
\dot{\mathcal A}_{{\rm LO},m}^{c}
\left(
\mathbf q_m^{(r)},\mathbf a
\right)
\dfrac{
\partial
\mathcal A_{{\rm LO},m}^{c}
\left(
\mathbf q_m^{(r)},\mathbf a
\right)
}{
\partial\varkappa
}
}{
\left[
\mathcal A_{{\rm LO},m}^{c}
\left(
\mathbf q_m^{(r)},\mathbf a
\right)
\right]^2
}
\Bigg\},
\end{aligned}
\label{eq:app_LO_eta_derivative}
\end{equation}
where \(\dfrac{
\partial
\mathcal A_{{\rm LO},m}^{c}
\left(
\mathbf q_m^{(r)},\mathbf a
\right)
}{
\partial\varkappa
}\) is given by
\begin{equation}\small
\frac{
\partial\mathcal A_{{\rm LO},m}^{c}(\mathbf q_m^{(r)},\mathbf a)
}{
\partial \varkappa
}
= \begin{cases}
    \displaystyle
    V_{\rm LO}
\frac{
e^{{\rm j}[\varphi_p-k_{\rm LO}R_{p,m}^{c}(\mathbf{q}_m^{(r)})] }
}{
R_{p,m}^{c}(\mathbf{q}_m^{(r)})
},&\varkappa=\beta_p,\\[7mm]
\displaystyle
{\rm j}V_{\rm LO}
\frac{a_p
e^{-{\rm j}k_{\rm LO}R_{p,m}^{c}(\mathbf{q}_m^{(r)})}
}{
R_{p,m}^{c}(\mathbf{q}_m^{(r)})
},&\varkappa=\varphi_p.
\end{cases}
\label{eq:LO_field_real_derivative}
\end{equation}
\( 
\dfrac{
\partial
\dot{\mathcal A}_{{\rm LO},m}^{c}
\left(
\mathbf q_m^{(r)},\mathbf a
\right)
}{
\partial\varkappa}\) is given by  \eqref{eq:app_LO_dot_A_derivative}, at the bottom of this page.

\bibliographystyle{IEEEtran}
\bibliography{myref}

\end{document}